\newcommand{\na}[1]{{\color{blue} #1}}
\newcommand{\mattnote}[1]{{\color{red} #1}}
\newtheorem{theorem}{Theorem}
\newtheorem{observation}{Observation}
\newtheorem{remark}{Remark}
\newtheorem{lemma}{Lemma}
\newtheorem{example}{Example}
\newtheorem{corollary}{Corollary}
\newtheorem{proposition}{Proposition}
\title{\vspace{-1 in} Bitcoin: A Natural Oligopoly}
\author{Nick Arnosti\thanks{Columbia University. Email: nicholas.arnosti@gmail.com.} \and Matt Weinberg\thanks{Princeton University. Email: smweinberg@princeton.edu. Supported by NSF CCF-1717899.}}
\begin{document}
\maketitle
\vspace{-.2 in}

\begin{abstract}

Although Bitcoin was intended to be a decentralized digital currency, in practice, mining power is quite concentrated. This fact is a persistent source of concern for the Bitcoin community.

We provide an explanation using a simple model to capture miners' incentives to invest in equipment. In our model, $n$ miners compete for a prize of fixed size. Each miner chooses an investment $q_i$, incurring cost $c_i q_i$, and then receives reward $\frac{q_i^\alpha}{\sum_j q_j^\alpha}$, for some $\alpha \geq 1$. When $c_i = c_j$ for all $i,j$, and $\alpha = 1$, there is a unique equilibrium where all miners invest equally. However, we prove that under seemingly mild deviations from this model, equilibrium outcomes become drastically more centralized. In particular, 
\begin{itemize}
\item When costs are asymmetric, if miner $i$ chooses to invest, then miner $j$ has market share at least $1-\frac{c_j}{c_i}$. That is, if miner $j$ has costs that are (e.g.) $20\%$ lower than those of miner $i$, then miner $j$ must control at least $20\%$ of the \emph{total} mining power.
\item In the presence of economies of scale ($\alpha > 1$), every market participant has a market share of at least $1-\frac{1}{\alpha}$, implying that the market features at most $\frac{\alpha}{\alpha - 1}$ miners in total.
\end{itemize}

We discuss the implications of our results for the future design of cryptocurrencies. In particular, our work further motivates the study of protocols that minimize ``orphaned'' blocks, proof-of-stake protocols, and incentive compatible protocols. 
\end{abstract}
\maketitle
\addtocounter{page}{-1}
\newpage
\section{Introduction}\label{sec:intro}
In the ten years since Bitcoin was introduced by \cite{nakamoto_2008}, cryptocurrencies have become increasingly mainstream, accounting for hundreds of billions of dollars in market capitalization\footnote{Source: \url{www.coinmarketcap.com}.} and increasing amounts of media coverage. The protocol underlying Bitcoin is undoubtedly clever, and has largely succeeded in creating a public yet anonymized record of transactions. 
However, Bitcoin has largely failed to deliver on one key promise: that of true decentralization. While there there is nothing in Bitcoin's source code that enforces centralized control (as opposed to e.g. Visa, which is centralized by design), Bitcoin's record of transactions is in fact dictated by a small number of participants. More specifically, Bitcoin relies on entities called ``miners" to record and verify transactions. Today, a majority of transactions are verified by one of four large ``mining pools."\footnote{Source: \url{https://blockchain.info/pools?timespan=4days}.} 
For reasons discussed below, this concentration of power has caused great concern among the Bitcoin community, and prompted some to argue forcefully for new cryptocurrencies. For example, one recent whitepaper wrote\footnote{Source: \url{https://bravenewcoin.com/assets/Whitepapers/NxtWhitepaper-v122-rev4.pdf}.}
\begin{quote} \it Bitcoin's creator, Satoshi Nakamoto, intended for the bitcoin network to be fully decentralized, but nobody could have predicted that the incentives provided by Proof of Work systems would result in the centralization of the mining process. \end{quote}

One reason that centralization of mining power might be considered unexpected is that mining rewards exhibit decreasing marginal gains: \emph{the more invested one currently is in Bitcoin mining, the less one has to gain from additional investment.} This is because each new mining device reduces the rewards earned by all existing devices. Our paper shows that this force is generally dwarfed by two other factors: {\em asymmetric costs} and {\em economies of scale}.

In the context of Bitcoin, mining costs are closely related to the cost of electricity, which varies substantially with geography. Most bitcoin mining is based in China, where some of the cheapest electricity in the world can be found near underutilized hydroelectric plants.\footnote{Source: \url{https://www.coinbureau.com/analysis/much-bitcoin-mining-concentrated-china/}.} Economies of scale arise naturally due to non-linearity in the cost of storing, powering, and cooling hardware, and also potentially due to strategic behavior~\cite{EyalS14, KiayiasKKT16, SapirshteinSZ16, BabaioffDOZ12, CarlstenKNW16, Eyal15}. Our main results demonstrate that even seemingly small cost imbalances or super-linear scaling of reward with investment result in a highly concentrated market for bitcoin mining. In particular, 





\begin{itemize}
\item Corollary~\ref{cor:main} states that if one agent can acquire computational power at, say, $20\%$ lower cost than another, then if the less efficient agent chooses a non-zero investment, the more efficient agent must control \emph{at least 20\% of all computational power}. 
\item Theorem~\ref{thm:bound} states that when rewards are proportional to each agent's investment raised to some power $\alpha > 1$, then in every equilibrium, at most $1+\frac{1}{\alpha-1}$ agents choose to invest at all. 
\end{itemize} 




Below, we provide a brief overview of the relevant aspects of Bitcoin before proceeding to describe our model and main results.

\subsection{What is Bitcoin Mining?}\label{sec:mining}

Bitcoin was designed to replace centralized ``digital currencies'' like Paypal or Visa. While many people (including the authors) are perfectly happy to use these systems, others are repulsed at the thought of their transaction fees going to a corporate CEO, or are concerned that the US government might pressure an organization to freeze a user's funds or reveal their transaction history. The promise of Bitcoin was that no single entity would be in control of the ledger, but instead millions of miners in a peer-to-peer network would each contribute to maintaining a record of past transactions, also known as a ``ledger."

There are two fundamental challenges facing Bitcoin and other cryptocurrencies. First, how to ensure consistency among the many copies of the transaction history? Second, how to incentivize users to store and update this information? Achieving consensus is a notoriously difficult problem studied by computer scientists for several decades~\cite{FischerLP85,LamportSP82,PeaseSL80}, and is especially challenging in \emph{permissionless environments} (where users can anonymously join and leave as they please). 

In order to ensure consistency, the Bitcoin protocol makes it difficult to amend the ledger. In particular, each ``block" of new transactions must be accompanied by the solution to a cryptopuzzle. The puzzle is designed in such a way that the fastest way to find a solution is simply to guess at random,\footnote{At least this is widely believed to be the case, and is true under the assumption that SHA-256 is an ideal hash function~\cite{BitcoinBook}.} so agents are selected to add to the ledger \emph{in proportion to their computational power}. 

In order to compensate agents for expenses associated with solving cryptopuzzles, the Bitcoin protocol gives a ``block reward" (currently 12.5 bitcoin) to agents who solve a puzzle. Users may also include an optional transaction fee that is also paid to the miner whose block authorizes their transaction. The process of randomly guessing solutions to a cryptopuzzle in search of rewards is referred to as \emph{mining}, and agents who participate in this process are called \emph{miners} -- terminology that we adopt in the remainder of the paper. Importantly, the Bitcoin protocol has a hard-coded ``difficulty adjustment'' for the cryptopuzzles so that one is solved every ten minutes in expectation, regardless of how much total computational power is in the network. 



While the Bitcoin protocol has indeed succeeded in attracting miners to maintain consistent copies of the transaction ledger, it is questionable whether this ledger can be considered ``decentralized." At the time of writing, four mining pools account for more than half of all blocks, and six mining pools account for more than three-quarters of all blocks. 
While the hardware behind a mining pool is not owned by a single entity, a recent study concluded that eleven ``large mining organizations" control over half of global mining power \cite{HilemanR17}. These miners could in principle freeze any user's funds, or wipe past transactions from the record (for more information on how this could happen, see~\cite{BitcoinBook}).






\subsection{Overview of Results}

The discussion above highlights several key features of the Bitcoin protocol, which form the basis for our model: (1) The total value of rewards available to miners is fixed.\footnote{Technically, payout to miners is determined by both the block reward and by transaction fees. For most of Bitcoin's history (including the present day), transaction fees have been trivially small compared to the block reward (although this was not the case for a brief period in late 2017 and early 2018 -- see \url{https://bitinfocharts.com/comparison/bitcoin-transactionfees.html}). Additionally, \cite{Huberman-Leshno-Moallemi_2017} observe that the fees paid in equilibrium should depend only on the throughput of the network, the number of users who want to send transactions, and their delay costs, and {\em not} on miners' investment decisions. Therefore, it's reasonable to treat the prize shared by miners as exogenous to their behavior. We discuss this point further in the conclusion.} (2) Miners make costly investments in computational power. (3) Miners are rewarded (roughly) in proportion to their computational power. The simplest model capturing these features is as follows:
\begin{itemize}
\item There is a fixed reward of value $1$ to be split among $n$ miners.
\item Each miner $i$ decides on computational power to acquire, $q_i$, incurring cost $cq_i$. 
\item Miner $i$ receives reward $\frac{q_i}{\sum_j q_j}$. 
\end{itemize}

In this model, there is a unique equilibrium where each miner invests the same amount, and possesses a $1/n$ fraction of total mining power (see Corollary \ref{cor:symmetric}). This decentralized outcome arises because investment in mining power exhibits diminishing marginal returns.  Additional investment lowers the value of earlier investments --- in the extreme case, if one is already responsible for 100\% of the computational power, there is \emph{no} marginal gain from increased mining power.


In practice, of course, costs are not perfectly symmetric nor perfectly proportional to mining hardware. Still, so long as both assumptions are ``approximately" satisfied, one might reasonably  conjecture that mining power should remain distributed among many parties. Our main (qualitative) message is that seemingly minor departures from these assumptions result in surprisingly high levels of centralization. 



In Section \ref{sec:proportional}, we introduce a model with asymmetric costs and rewards proportional to investment, and show that there is a unique equilibrium (Theorem \ref{thm:unique2}). Corollary \ref{cor:main} states that if miners $i$ and $j$ both purchase mining equipment in this equilibrium, and miner $i$ has per-unit costs that are $x\%$ lower than those of miner $j$, then miner $i$ must possess at least $x\%$ of the {\em total} computational power. We further provides a few examples to help the reader appreciate that with even moderate cost asymmetries, this results in most mining power being controlled by only a few miners.

 In Section \ref{sec:disproportional}, we model economies of scale by assuming that rewards are proportional to $q_i^{\alpha}$ for some $\alpha > 1$. This captures the fact that powering and cooling equipment is cheaper on a per-unit basis, as well as the fact that clever deviations from the mining protocol possibly allow large miners to earn more than his or her ``fair share" of rewards.\footnote{One could explicitly model each of these factors by assuming that when miner $i$ acquires computing power $q_i$, he or she pays a cost $c_i q_i^{\beta}$ and earns a reward of $q_i^{\alpha'}/\sum_j q_j^{\alpha'}$, for some $\beta < 1 < \alpha'$. A simple change of variables reveals that this is strategically equivalent to our model with $\alpha = \alpha'/\beta$, so we work with this simpler representation.} In this case, the results of Corollary \ref{cor:main} grow even starker. Absent economies of scale, it was possible for many miners to participate so long as their costs were sufficiently similar. With economies of scale, we show in Corollary \ref{cor:dfoc} that each participant must have a market share of at least $1 - 1/\alpha$, implying that the number of participants in equilibrium is at most $\frac{\alpha}{\alpha - 1}$, even if all miners have identical costs. In other words, even if $\alpha = 100/99$ ---  implying that the second piece of mining hardware costs $98.6\%$ as much as the first -- then in equilibrium, every miner who chooses to participate will earn at least $1\%$ of all rewards.

Sections~\ref{sec:proportional} and~\ref{sec:disproportional} require almost no background knowledge of cryptocurrencies, and provide standalone insight to centralization in markets where agents compete over a fixed prize. In Section~\ref{sec:conclusions}, we discuss the impact of our results for the future design of cryptocurrencies. Our main conclusion is that the currently-observed concentration of mining power is not a temporary aberration, but rather a natural result of the economic incentives established by Bitcoin and other ``proof of work" based protocols.

\section{Related Work}

There are two classes of related works to discuss. The first concerns academic and non-academic writing about Bitcoin, which touches on some of the themes explored here, but bears no mathematical resemblance to our work. The second consists of existing literature on ``rent-seeking contests", which is mathematically closely related to our work but does not discuss cryptocurrencies.

\subsection{Cryptocurrencies}
Numerous sources have empirically documented high levels of concentration in Bitcoin mining. The most frequently-cited statistics come from publicly available data on large mining pools.\footnote{https:blockchain.info/pools?timespan=4days .} At the time of writing, four mining pools are responsible for over 50\% of mined blocks, and six are responsible for over 75\%. While these statistics are staggering, an informed reader might observe that pool managers do not own all of the the hardware mining on their behalf, and argue that these numbers are therefore misleading.

Other measures of centralization are not publicly available, and are often intentionally obscured. Nevertheless, prior work has concluded that 56\% of Bitcoin \emph{nodes}\footnote{A node listens for transactions and blocks, and forwards them to the rest of the network (additionally checking for validity before forwarding). 
} exist in large data centers~\cite{GencerBERS18}, and that approximately $100$ nodes are responsible for the initial announcement of 75\% all blocks. Even more pertinent to our work is a recent report on the distribution of mining power, which identifies eleven ``large mining entities" and estimates that they collectively control a majority of global computational power dedicated to Bitcoin mining~\cite{HilemanR17}.

\subsection{Market Participation}

Our model is formally identical to that proposed by \cite{tullock_1980}. That paper prompted a large literature on ``imperfectly discriminating contests" (so-named to highlight the contrast with the ``perfectly discriminating contest" of an all-pay auction). Most of the follow-up literature focuses on the extent of {\em rent dissipation} -- that is, comparing expenses incurred by contestants to the value of the prize. Two papers \cite{hillman-riley_1989, gradstein_1995} explicitly address the number of entrants, as we do. Proposition 5 in \cite{hillman-riley_1989} is similar to our Theorem \ref{thm:unique2}, although it does not explicitly characterize the market share of each participant. \cite{gradstein_1995} assumes that costs are drawn iid, and shows that as the number of potential entrants $n$ grows, the fraction who choose to enter in equilibrium converges to zero. Later, he also considers a variant of the model in which entrants incur a small fixed cost.

Our contribution relative to \cite{hillman-riley_1989, gradstein_1995} is three-fold. First, we provide an arguably cleaner characterization of equilibrium via Theorem~\ref{thm:unique2}. Second, we study the case of economies of scale ($\alpha > 1$), whereas those papers focus only on the proportional model, and therefore have no analog to Theorem \ref{thm:bound} or Corollary \ref{cor:dfoc}. Third, we introduce these tools to the TCS community via connection to Bitcoin mining, and note that they have implications for the design of future cryptocurrencies.

%
%


\section{Asymmetric Costs and Proportional Rewards}\label{sec:proportional}
In this section, we consider a model with asymmetric costs and rewards proportional to investment, which we refer to as the \emph{proportional model}, and observe that the  symmetry in the toy model of Section~\ref{sec:intro} is crucial for a decentralized equilibrium. Formally, our model is the following: There are $n \geq 2$ miners competing for a prize of value $1$. First, each miner $i$ chooses an investment level $q_i$, paying cost $c_i q_i$ to do so. Then miners earn market share in proportion to $q_i$ (specifically, miner $i$ receives reward $q_i/\sum_j q_j$). The costs $c_i > 0$ are common knowledge. Without loss of generality, assume that $c_1 \leq c_2 \leq  \cdots \leq c_n$, and define $c_{n+1} = \infty$ for ease of notation. The utility for miner $i$ when investments are given by $q$ is
\[U_i(q) = x_i(q) - c_i q_i,\]
where $x_i(q) = q_i/ \sum_j q_j$, and $x_i(q) = 0$ for all $i$ if $q_i = 0$ for all $i$.
The main result of this section characterizes a unique equilibrium outcome. To state this outcome, we define

\begin{equation} X(c) = \sum_i \max(1 - c_i/c,0). \label{eq:x} \end{equation}

\begin{lemma} \label{lem:monotone}
There is a unique value $c^*$ satisfying $X(c^*) = 1$, and $c^* > c_2$.
\end{lemma}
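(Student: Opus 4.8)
The plan is to analyze the function $X(c)$ on the domain $c > 0$ and show it is continuous, strictly monotone on the relevant range, and takes values straddling $1$, so that the Intermediate Value Theorem plus strict monotonicity give a unique root $c^*$; then a separate computation at $c = c_2$ shows $X(c_2) < 1$, forcing $c^* > c_2$. First I would observe that each summand $\max(1 - c_i/c, 0)$ is a continuous, nondecreasing function of $c$: it is identically $0$ for $c \le c_i$ and equals $1 - c_i/c$ (which is increasing in $c$) for $c > c_i$. Hence $X$ is continuous and nondecreasing on $(0,\infty)$, and in fact strictly increasing on $(c_1, \infty)$ because at least the first term $1 - c_1/c$ is strictly increasing there. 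Since $X(c) = 0$ for $c \le c_1$ and $X(c) \to n \ge 2 > 1$ as $c \to \infty$, continuity yields at least one solution to $X(c) = 1$, and that solution must lie in $(c_1, \infty)$ where $X$ is strictly increasing, so it is unique. This handles existence and uniqueness of $c^*$.

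For the bound $c^* > c_2$, I would evaluate $X$ at $c = c_2$. Since the miners are ordered $c_1 \le c_2 \le \cdots$, at $c = c_2$ every term with index $i \ge 2$ satisfies $c_i \ge c_2 = c$, so $\max(1 - c_i/c, 0) = 0$, and the only possibly nonzero term is $i = 1$, giving $X(c_2) = \max(1 - c_1/c_2, 0) = 1 - c_1/c_2 < 1$ because $c_1 > 0$. (If $c_1 = c_2$ this is simply $0 < 1$.) Since $X$ is strictly increasing past $c_1$ and $X(c_2) < 1 = X(c^*)$, we must have $c^* > c_2$, as claimed.

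The main subtlety — really the only place to be careful — is the behavior at the breakpoints $c_i$ and the possibility of ties among the costs: one must check that $X$ is genuinely strictly increasing (not merely nondecreasing) on the interval where the root lives, which is why I single out the $i=1$ term, whose contribution $1 - c_1/c$ is strictly increasing for all $c > c_1 > 0$ regardless of how the other costs are arranged or tied. Everything else is a routine continuity-and-monotonicity argument. I would also remark that strict monotonicity is what makes the root unique rather than an interval, and that $n \ge 2$ (assumed in the model) is exactly what guarantees $\lim_{c \to \infty} X(c) > 1$.
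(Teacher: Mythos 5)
Your argument is correct and is essentially the paper's own proof: both establish that $X$ is continuous and nondecreasing, strictly increasing on $(c_1,\infty)$, with $X(c_2) = 1 - c_1/c_2 < 1$ and $\lim_{c\to\infty} X(c) = n > 1$, whence a unique root $c^* > c_2$ exists. You simply spell out the breakpoint and tie-handling details that the paper leaves implicit.
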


\begin{proof}
Observe that $X$ is continuous and non-decreasing in $c$. Furthermore, $X$ is strictly increasing on $[c_1, \infty)$, with $X(c_2) = 1 - c_1/c_2 < 1$ and $\lim_{c \rightarrow \infty} X(c) = n$.
\end{proof}

\begin{theorem} \label{thm:unique2}
 In the proportional model, there is a unique pure strategy equilibrium. In it, $q_i = \frac{1}{c^*}\max(1 - c_i/c^*,0)$, and $x_i(q) = \max(1-c_i/c^*,0)$.
\end{theorem}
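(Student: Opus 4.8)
The plan is to characterize equilibria through first-order conditions, using that each miner's payoff is strictly concave in their own investment once at least two miners are active, and then match the resulting system against the claimed profile; this analysis simultaneously yields existence and uniqueness.

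I would first dispose of the degenerate profiles, where $x_i(\cdot)$ is not smooth. If every $q_j=0$, then every payoff is $0$, but any miner deviating to an arbitrarily small $t>0$ captures the entire prize and earns $1-c_it>0$; so in any equilibrium $Q:=\sum_j q_j>0$. If exactly one miner $k$ is active, then $x_k\equiv 1$ whenever $q_k>0$, so $U_k$ is strictly decreasing there with supremum $1$ not attained, i.e.\ miner $k$ has no best response. Hence any equilibrium has at least two active miners, so $S_i:=Q-q_i>0$ for every $i$, and miner $i$'s payoff as a function of its own investment $t\ge 0$, namely $f(t)=\frac{t}{t+S_i}-c_it$, is smooth and strictly concave on $[0,\infty)$ (one checks $f''(t)=-2S_i/(t+S_i)^3<0$) and attains a maximum (since $f(t)\to-\infty$). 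Its maximizer is therefore the unique solution of $f'(t)=\frac{S_i}{(t+S_i)^2}-c_i=0$ when $f'(0)>0$, i.e.\ $S_i<1/c_i$, and is $t=0$ otherwise, i.e.\ when $S_i\ge 1/c_i$.

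Next I would plug the equilibrium identity $t+S_i=Q$ into this characterization: an active miner satisfies $S_i=c_iQ^2$, hence $q_i=Q-c_iQ^2=Q(1-c_iQ)$ with $c_iQ<1$; an inactive miner satisfies $S_i=Q\ge 1/c_i$, hence $1-c_iQ\le 0$. Either way $q_i=Q\max(1-c_iQ,0)$. Summing over $i$ and dividing by $Q>0$ gives $1=\sum_i\max(1-c_iQ,0)=X(1/Q)$, so Lemma~\ref{lem:monotone} forces $1/Q=c^*$; substituting back recovers exactly $q_i=\frac1{c^*}\max(1-c_i/c^*,0)$ and $x_i(q)=q_i/Q=\max(1-c_i/c^*,0)$, so every equilibrium equals the claimed profile. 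To finish with existence I would verify the claimed profile meets the best-response conditions above: there $Q=\frac1{c^*}X(c^*)=\frac1{c^*}$ and $S_i=\frac1{c^*}\min(c_i/c^*,1)>0$, so $S_i(c^*)^2=\min(c_i,c^*)$ and $f'(q_i)=S_i(c^*)^2-c_i$ equals $0$ if $c_i<c^*$ (active miners) and $c^*-c_i\le 0$ if $c_i\ge c^*$ (inactive miners); strict concavity then makes each $q_i$ a best response. (Lemma~\ref{lem:monotone} also gives $c^*>c_2$, so miners $1$ and $2$ are active, consistent with the first step.)

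The one genuinely delicate point is the non-smoothness of $x_i(\cdot)$ at the all-zero point: every first-order argument here is valid only because in equilibrium at least two miners invest, so that each unilateral deviation lands in the region where $f$ is smooth and concave. I therefore expect ruling out the zero-active and single-active profiles — the first step above — to be the main (if modest) obstacle; the remainder is routine algebra together with Lemma~\ref{lem:monotone}.
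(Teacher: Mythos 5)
Your proof is correct and follows essentially the same route as the paper: rule out the all-zero and single-active profiles, use strict concavity of each miner's payoff (given $S_i>0$) to reduce equilibrium to the first-order characterization $q_i = Q\max(1-c_iQ,0)$, sum over $i$ to get $X(1/Q)=1$, and invoke Lemma~\ref{lem:monotone}. The only difference is presentational — you parametrize by $S_i=\sum_{j\neq i}q_j$ and compute the best response directly, while the paper packages the same content as Lemma~\ref{lem:foc}; your explicit existence check at the end corresponds to the ``sufficiency'' direction of that lemma.
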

Note that $c^*$ serves as a participation threshold: miners invest in equilibrium if and only if their cost is less than $c^*$. The total amount of mining hardware purchased is $1/c^*$.

Theorem~\ref{thm:unique2} immediately implies the following corollaries, stated in Section~\ref{sec:intro}. 

\begin{corollary} \label{cor:symmetric}
In the proportional model, if miners have identical costs ($c = c_1 = c_2 = \cdots = c_n$), then all $n$ miners participate and invest such that $q_ic_i = \frac{n-1}{n^2}$ for all $i$.
\end{corollary}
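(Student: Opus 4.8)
The plan is to derive this as an immediate consequence of Theorem~\ref{thm:unique2}: the only work is to compute the participation threshold $c^*$ in the special case where all costs coincide, and then substitute into the closed-form expressions for $q_i$ and $x_i(q)$ provided there. No new fixed-point or equilibrium analysis is needed.

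First I would instantiate the definition~\eqref{eq:x} of $X$ under the hypothesis $c_1 = c_2 = \cdots = c_n = c$. For any candidate threshold $\tilde c \geq c$, all $n$ summands are equal and nonnegative, so $X(\tilde c) = \sum_i \max(1 - c/\tilde c,\, 0) = n\bigl(1 - c/\tilde c\bigr)$. Lemma~\ref{lem:monotone} guarantees a unique $c^*$ with $X(c^*) = 1$; solving $n\bigl(1 - c/c^*\bigr) = 1$ gives $c/c^* = \tfrac{n-1}{n}$, i.e.\ $c^* = \tfrac{n}{n-1}\,c$. Since $n \geq 2$, we have $c^* = \tfrac{n}{n-1}\,c > c$, so every miner's cost lies strictly below the threshold; by the ``if and only if'' participation characterization following Theorem~\ref{thm:unique2}, every miner chooses $q_i > 0$, i.e.\ all $n$ miners participate.

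Second, I would substitute $c^* = \tfrac{n}{n-1}\,c$ into the equilibrium formula of Theorem~\ref{thm:unique2}: $q_i = \tfrac{1}{c^*}\max\bigl(1 - c_i/c^*,\,0\bigr) = \tfrac{1}{c^*}\bigl(1 - \tfrac{n-1}{n}\bigr) = \tfrac{1}{n\,c^*}$, and therefore $q_i c_i = \tfrac{c}{n\,c^*} = \tfrac{1}{n}\cdot\tfrac{n-1}{n} = \tfrac{n-1}{n^2}$, as claimed. As a consistency check one can also read off $x_i(q) = \max\bigl(1 - c_i/c^*,\,0\bigr) = \tfrac{1}{n}$, so market share is split evenly among the $n$ participants, recovering the decentralized toy-model outcome described in the introduction.

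I do not expect any real obstacle: the substance lives entirely in Theorem~\ref{thm:unique2} and Lemma~\ref{lem:monotone}, which I am free to assume, and the remainder is arithmetic. The single point that warrants explicit mention is verifying $c^* > c$ so that the $\max$ operators are not binding for any miner — this is what licenses the ``all $n$ miners participate'' clause — and it follows immediately from $\tfrac{n}{n-1} > 1$.
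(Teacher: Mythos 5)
Your proposal is correct and follows exactly the route the paper intends: the paper states Corollary~\ref{cor:symmetric} as an immediate consequence of Theorem~\ref{thm:unique2}, and your computation of $c^* = \tfrac{n}{n-1}c$ from $X(c^*)=1$ followed by substitution into $q_i = \tfrac{1}{c^*}\max(1-c_i/c^*,0)$ is precisely the omitted arithmetic. The check that $c^* > c$ (so all $n$ miners participate) is the right detail to make explicit.
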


\begin{corollary}\label{cor:main}
In the proportional model, if miner $i$ participates at all in the unique equilibrium (that is, $q_i > 0$), then $x_j(q) \geq 1-\frac{c_j}{c_i}$ for all $\ell$. That is, the market share of miner $\ell$ is at least $1-\frac{c_j}{c_i}$. 
\end{corollary}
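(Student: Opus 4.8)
The plan is to read this off directly from the equilibrium characterization in Theorem~\ref{thm:unique2}, which already pins down the unique equilibrium outcome as $x_\ell(q) = \max(1 - c_\ell/c^*,\,0)$ for every miner $\ell$, where $c^*$ is the participation threshold defined by $X(c^*) = 1$. So the only thing the hypothesis ``$q_i > 0$'' needs to buy us is a comparison between $c_i$ and $c^*$, and everything else is monotonicity bookkeeping.

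First I would note that, by the formula $q_i = \frac{1}{c^*}\max(1 - c_i/c^*,\,0)$ from Theorem~\ref{thm:unique2}, the assumption $q_i > 0$ is equivalent to $1 - c_i/c^* > 0$, i.e.\ $c_i < c^*$. This is exactly the ``participation threshold'' remark immediately following the theorem; I would just restate it here so the corollary is self-contained.

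Next, fix an arbitrary miner $\ell$ and split into two cases. If $1 - c_\ell/c_i \le 0$, the claimed bound $x_\ell(q) \ge 1 - c_\ell/c_i$ is immediate because market shares are nonnegative. Otherwise $1 - c_\ell/c_i > 0$, so $c_\ell < c_i < c^*$, and in particular $1 - c_\ell/c^* > 0$, so Theorem~\ref{thm:unique2} gives $x_\ell(q) = 1 - c_\ell/c^*$. Since $c_\ell > 0$ and $c_i < c^*$ we have $c_\ell/c^* < c_\ell/c_i$, hence $x_\ell(q) = 1 - c_\ell/c^* > 1 - c_\ell/c_i$, which is the desired inequality. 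Combining the two cases proves the corollary.

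I do not expect any real obstacle: the content is entirely carried by Theorem~\ref{thm:unique2}, and the corollary is just the observation that a participating miner certifies $c_i < c^*$, together with the fact that $c \mapsto 1 - c_\ell/c$ is increasing. The only care needed is handling the $\max(\cdot,0)$ truncation (the case $1 - c_\ell/c_i \le 0$), which is why I would phrase it as the short case split above rather than a one-line chain of inequalities.
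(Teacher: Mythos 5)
Your proof is correct and takes essentially the same route as the paper: read off $x_\ell(q) = \max(1 - c_\ell/c^*, 0)$ from Theorem~\ref{thm:unique2}, use $q_i > 0 \Leftrightarrow c_i < c^*$, and conclude by monotonicity of $c \mapsto 1 - c_\ell/c$. Your explicit case split on the $\max(\cdot,0)$ truncation is a minor point of extra care that the paper's one-line proof glosses over, but the substance is identical.
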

\begin{proof}
By Theorem~\ref{thm:unique2}, $x_j(q) = 1-c_j/c^*$ and $c_i < c^{*}$ for any miner who participates.
\end{proof}

One way to think about Corollary \ref{cor:main} is as follows. Suppose that we wish to know whether miner $k$ will participate in equilibrium. For each lower-cost miner, ask ``by what percentage are this miner's costs lower than those of miner $k$?" Miner $k$ will participate if and only if the sum of these percentages is less than 100. For example, if there are three miners with costs that are $20\%$ lower than $k$'s, and five more with costs that are $10\%$ lower, then $k$ will not participate, as $3 \times 20 + 5 \times 10 = 110 > 100$. 

Before proving Theorem~\ref{thm:unique2}, we provide a few quick examples illustrating the degree of centralization with various cost structures. Each of these examples is designed so that costs are ``not too different." In the first example, all $n$ miners have costs at most twice that of the most efficient miner, and yet only $7$ miners choose to participate.

\begin{example} Let $c_i = i/(i+1)$. We have $c_7 < c^* \approx 0.88 <  c_8$, so $7$ miners participate. Moreover, $c_7 = 7/8$, so we can immediately conclude from Corollary~\ref{cor:main} that miner $1$ controls more than $1-\frac{1/2}{7/8} = 3/7$ of the market. 
Miner $2$ controls more than $1-\frac{2/3}{7/8} = 5/21$ of the market, so jointly, they control more than $2/3$ of the market. 

\end{example} 

In the following example, all $n$ miners participate, but mining power is highly concentrated among the largest miners.

\begin{example}\label{ex:exp}Let $c_i = 1-2^{-i}$. Note that $X(1) = \sum_i (1 - c_i) < 1$, from which it follows that $c^* > 1$ (because $X(c^*) = 1$ and $X$ is non-decreasing). Therefore, all miners participate in equilibrium. However, in equilibrium we have $x_i = 1 - c_i/c^* \geq 1 - c_i = 2^{-i}$. So in particular, Miner 1 controls at least half the mining power, and together with Miner 2 controls at least $3/4$. 


\end{example}


\subsection{Proof of Theorem~\ref{thm:unique2}}
The proof of Theorem~\ref{thm:unique2} follows mostly from a clever manipulation of first-order conditions. First, we state a helpful lemma showing that first-order conditions are necessary and sufficient for identifying pure-strategy equilibria.

\begin{lemma} \label{lem:foc}
Suppose that $\tilde{q}$ is an equilibrium in the proportional model. Then for each $i$, $\sum_{j \neq i} \tilde{q}_j > 0$, and 
\begin{equation} x_i(\tilde{q}) = \max( 1 - c_i \sum_i \tilde{q}_i,0). \label{eq:xi} \end{equation}
Furthermore, any $\tilde{q}$ satisfying~\eqref{eq:xi} for all $i$ is an equilibrium. 
\end{lemma}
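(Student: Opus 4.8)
The plan is to reduce everything to the observation that, once the other miners' investments are fixed at a profile with $\sum_{j\neq i}\tilde q_j>0$, miner $i$'s payoff $f_i(q_i):=\frac{q_i}{q_i+\sum_{j\neq i}\tilde q_j}-c_iq_i$ is a $C^1$, strictly concave function of $q_i$ on $[0,\infty)$, since its derivative $\frac{\sum_{j\neq i}\tilde q_j}{(q_i+\sum_{j\neq i}\tilde q_j)^2}-c_i$ is strictly decreasing and tends to $-c_i<0$. Consequently $q_i$ is a best response to $\tilde q_{-i}$ if and only if either $q_i=0$ and $f_i'(0)\le 0$, or $q_i>0$ and $f_i'(q_i)=0$. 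All that remains is to (a) rule out the degenerate profiles where this concavity picture breaks down, namely those in which $\sum_{j\neq i}\tilde q_j=0$, and (b) translate this first-order condition into the stated closed form.

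For (a), first note that if $\tilde q_j=0$ for all $j$ then every miner earns $0$, and any miner deviating to a $q_i>0$ captures the entire prize and earns $1-c_iq_i$, which is positive once $q_i<1/c_i$; so no equilibrium has $\sum_j\tilde q_j=0$. Next, suppose $\sum_j\tilde q_j>0$ but $\sum_{j\neq i}\tilde q_j=0$ for some $i$. Then miner $i$ is the sole investor and collects the whole prize as long as $q_i>0$, so its payoff is $1-c_iq_i$ for $q_i>0$ (strictly decreasing, with supremum $1$) but drops to $0$ at $q_i=0$; miner $i$ thus has no best response, contradicting equilibrium. Hence $\sum_{j\neq i}\tilde q_j>0$ for every $i$, and in particular $f_i$ is genuinely smooth on $[0,\infty)$ (the special-case convention for $x_i$ is never invoked).

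For (b), write $Q=\sum_j\tilde q_j$. If $\tilde q_i=0$ then $x_i(\tilde q)=0$, and the best-response condition $f_i'(0)\le 0$ reads $c_i\sum_{j\neq i}\tilde q_j=c_iQ\ge 1$, i.e. $\max(1-c_iQ,0)=0=x_i(\tilde q)$. If $\tilde q_i>0$, the stationarity condition $f_i'(\tilde q_i)=0$ gives $\sum_{j\neq i}\tilde q_j=c_iQ^2$; dividing by $Q$ and using $\frac{\sum_{j\neq i}\tilde q_j}{Q}=1-x_i(\tilde q)$ yields $x_i(\tilde q)=1-c_iQ$, which is positive because $\tilde q_i>0$, so it equals $\max(1-c_iQ,0)$. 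This proves~\eqref{eq:xi}.

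The converse runs the same computation backwards. Given $\tilde q$ satisfying~\eqref{eq:xi} for all $i$, set $Q=\sum_j\tilde q_j$; then $Q>0$ (otherwise all $x_i=0$ while the right-hand sides equal $\max(1,0)=1$), and $\sum_{j\neq i}\tilde q_j>0$ for each $i$ (a sole investor would need $x_i=1=\max(1-c_iQ,0)$, impossible for $c_i,Q>0$). For each $i$, reading~\eqref{eq:xi} in the two cases above shows that $\tilde q_i=0$ forces $c_iQ\ge 1$ hence $f_i'(0)\le 0$, while $\tilde q_i>0$ forces $\sum_{j\neq i}\tilde q_j=c_iQ^2$ hence $f_i'(\tilde q_i)=0$; by strict concavity of $f_i$ this makes $\tilde q_i$ a best response to $\tilde q_{-i}$. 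As this holds for every $i$, $\tilde q$ is an equilibrium. I expect the only genuinely delicate point to be the no-monopolist step: it is the one place where the discontinuity of the payoff at the all-zero profile matters, and it has to be argued through the lone investor's lack of a best response rather than through a competitor's deviation.
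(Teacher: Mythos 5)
Your proof is correct and follows essentially the same route as the paper's: establish $\sum_{j\neq i}\tilde q_j>0$ by the same deviation arguments (a small positive investment against the all-zero profile, halving for a sole investor), then use concavity of $U_i$ in $q_i$ to make the first-order conditions necessary and sufficient, and translate them into~\eqref{eq:xi} via $\tfrac{\sum_{j\neq i}\tilde q_j}{\sum_j\tilde q_j}=1-x_i(\tilde q)$. The only cosmetic difference is that you get concavity from the derivative being strictly decreasing rather than from computing the second derivative explicitly.
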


\begin{proof}
We first prove that if $\tilde{q}$ is an equilibrium, then for each $i$ we have $\sum_{j \neq i} \tilde{q}_j > 0$. If this were not the case, then no $\tilde{q}_i >0$ can be a best response, as $i$ could do better by choosing $q_i = \tilde{q}_i/2$. However, $\tilde{q}_i = 0$ also cannot be a best response, as $i$ could do better with any $q_i \in (0,1/c_i)$.

Next, note that $U_i(q) = x_i(q) - c_iq_i$ is continuous and differentiable in $q_i$ on $\{q : \sum_{j \neq i} q_j > 0\}$. Slightly abusing notation, for the remainder of this proof we define
\begin{equation} x_i'(q) = \frac{\partial x_i(q)}{\partial q_i}  = \frac{\partial}{\partial q_i} \frac{q_i}{\sum_j q_j} = \frac{1}{\sum_j q_j} -\frac{q_i}{(\sum_j q_j)^2} = \frac{1 - x_i(q)}{\sum_j q_j} \geq  0,\label{eq:xprime} \end{equation}
and 
\begin{equation} U_i'(q) = \frac{\partial U_i(q)}{\partial q_i} = x_i'(q) - c_i. \label{eq:uprime} \end{equation}

Differentiability of $U_i$ implies that if $\tilde{q}_i$ is a best response to $\tilde{q}_{-i}$, then 
\begin{equation} U_i'(\tilde{q}) \leq 0,\,\,\, \text{ with equality if } \,\,\, \tilde{q}_i > 0. \label{eq:foc} \end{equation} 
In fact, these first-order conditions are sufficient to identify an equilibrium -- that is, any $\tilde{q}$ for which \eqref{eq:foc} holds for all $i$ is an equilibrium. This is because $U_i(q)$ is concave in $q_i$. To see this, note that \eqref{eq:xprime} and \eqref{eq:uprime} imply:
\begin{align*}
\frac{\partial}{\partial q_i}U_i'(q) &= \frac{\partial}{\partial q_i}x_i'(q) = \frac{-x_i'(q)}{\sum_j q_j}  - \frac{1 - x_i(q)}{\left(\sum_j q_j\right)^2} = - \frac{2 x_i'(q)}{\sum_j q_j}\leq 0.\\
\end{align*}

It remains to show that \eqref{eq:xi} and \eqref{eq:foc} are equivalent. First, suppose that \eqref{eq:foc} holds. Then by \eqref{eq:xprime} and \eqref{eq:uprime}, for all $i$ we have
\[ U_i'(\tilde{q}) = x_i'(\tilde{q}) - c_i = (1 - x_i(\tilde{q}))/\sum_j \tilde{q}_j - c_i \leq 0,  \text{ with equality if } q_i > 0. \]
Therefore if $q_i(\tilde{q}) > 0$, we must have  $x_i(\tilde{q}) = 1 - c_i \sum_j \tilde{q}_j$, and $\tilde{q}_i = x_i(\tilde{q}) = 0$ if and only if $c_i \sum_j \tilde{q}_j \geq 1$. In other words, \eqref{eq:xi} holds. Conversely, if \eqref{eq:xi} holds, then substituting into \eqref{eq:xprime} yields
\[x_i'(\tilde{q}) = \frac{1 - x_i(\tilde{q})}{\sum_j \tilde{q}_j} = \frac{1 - \max(1 - c_i \sum_j \tilde{q}_j, 0)}{\sum_j \tilde{q}_j}.\]
Therefore, 

\begin{enumerate}[label=\roman*.]
\item if $\tilde{q}_i > 0$, then $x_i(\tilde{q}) > 0$ and $x'_i(\tilde{q}) = c_i$, so $U_i'(\tilde{q}) = x_i'(\tilde{q}) - c_i = 0$.
\item If $\tilde{q}_i = 0$,  then $x_i'(\tilde{q})  = 1/\sum_j \tilde{q}_j \leq c_i$, so $U_i'(\tilde{q}) = x_i'(\tilde{q}) - c_i \leq 0$.
\end{enumerate}
That is to say, \eqref{eq:foc} holds.\end{proof}

Lemma~\ref{lem:foc} drives the proof of Theorem~\ref{thm:unique2}. We first use Lemma~\ref{lem:foc} to claim that in any equilibrium, miners with lower cost have greater market share.
\begin{corollary}\label{cor:invest}
If $\tilde{q}$ is an equilibrium of the proportional model and $c_i \leq c_j$, then $\tilde{q}_i \geq \tilde{q}_j$. That is, if miner $j$'s costs are no higher than those of miner $i$, then miner $j$ invests at least as much as miner $i$ in equilibrium.
\end{corollary}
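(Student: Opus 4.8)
The plan is to read the equilibrium market shares straight off Lemma~\ref{lem:foc} and compare them. First I would fix an equilibrium $\tilde q$ and write $Q = \sum_k \tilde q_k$. Lemma~\ref{lem:foc} guarantees $\sum_{k \neq i}\tilde q_k > 0$ for every $i$, so in particular $Q > 0$; this is the only thing that needs care, since it lets us legitimately write $x_i(\tilde q) = \tilde q_i/Q$ and divide by $Q$ below.

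Next I would invoke the characterization \eqref{eq:xi} from Lemma~\ref{lem:foc}: in the equilibrium $\tilde q$ we have $x_i(\tilde q) = \max(1 - c_i Q,\, 0)$ for every $i$. Multiplying by $Q$ gives the closed form $\tilde q_i = Q\, x_i(\tilde q) = Q\max(1 - c_i Q,\, 0)$, valid for every miner $i$ simultaneously (including the non-participating ones, for whom both sides are $0$).

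The conclusion is then immediate: for fixed $Q > 0$ the function $c \mapsto \max(1 - cQ,\, 0)$ is non-increasing in $c$ (it is the truncation at $0$ of an affine, strictly decreasing function). Hence $c_i \leq c_j$ forces $\max(1 - c_i Q, 0) \geq \max(1 - c_j Q, 0)$, and multiplying through by $Q > 0$ yields $\tilde q_i \geq \tilde q_j$. I do not anticipate any real obstacle here — all the substance lives in Lemma~\ref{lem:foc}; the proof is just the observation that equilibrium investment is a monotone function of a miner's own cost, with the lone subtlety being the justification of $Q > 0$ (so that the formula for $x_i$ and the division are legitimate, and so that the monotonicity goes the right way).
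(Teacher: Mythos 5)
Your proof is correct and is essentially the paper's own argument: both read $x_i(\tilde q)=\max(1-c_iQ,0)$ off Lemma~\ref{lem:foc} and observe this is non-increasing in $c_i$, so that multiplying by the common total $Q$ orders the investments. Your extra care in justifying $Q>0$ via the lemma is a fine (if minor) addition.
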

\begin{proof}
By Lemma \ref{lem:foc}, $x_i(\tilde{q}) = \max(1 - c_i \sum_j \tilde{q}_j, 0)$, which is weakly decreasing in $c_i$.
%
\end{proof}

\begin{proof}[Proof of Theorem \ref{thm:unique2}]
By Lemma \ref{lem:foc}, we know that $\tilde{q}$ is an equilibrium if and only if $x_i(\tilde{q}) = \max(1 - c_i \sum_j \tilde{q}_j, 0)$ for all $i$. But by definition of $x_i$, we have that 
\[1 = \sum_i x_i(\tilde{q}) = \sum_i \max(1 - c_i \sum_j \tilde{q}_j, 0) = X(1/\sum_j \tilde{q}_j),\]
where $X$ is as defined in \eqref{eq:x}.
Therefore, Lemma \ref{lem:monotone} implies that $\sum_j \tilde{q}_j = 1/c^*$, so that $x_i(\tilde{q}) = \max(1 - c_i /c^*, 0)$ and $\tilde{q}_i = x_i(\tilde{q}) \sum_j \tilde{q} = \frac{1}{c^*}\max(1 - c_i /c^*, 0)$. In other words, there is exactly one equilibrium, and it is as described in Theorem \ref{thm:unique2}. \end{proof}

Of theoretical (if not practical) interest, we note that the condition that $n$ is finite is important for Theorem \ref{thm:unique2} to hold. In particular, in Appendix \ref{app:noeq}, we give an example with countably many miners in which no equilibrium exists.

\section{Disproportional Rewards}\label{sec:disproportional}
In this section, we extend Theorem~\ref{thm:unique2} (or more accurately, something closer to Corollary~\ref{cor:main}) to a model with Economies of Scale (EoS). We call this the \emph{EoS model}. The setup is almost entirely the same: there are still $n \geq 2 $ miners competing for a prize of value $1$. Each miner chooses an investment $q_i$, paying $c_i q_i$ to do so. However, miners now share rewards proportionally to $q_i^\alpha$, for some parameter $\alpha \geq 1$. That is, miner $i$'s reward is $q_i^\alpha /\sum_j q_j^\alpha$. We'll use the same notation as the previous section and denote by $x_i(q) = q_i^\alpha/\sum_j q_j^\alpha$ miner $i$'s market share. The miner's utility is still $x_i(q) - c_i q_i$. We proceed with the main theorem statement of this section:

\begin{theorem} \label{thm:bound} Let $q$ be any equilibrium in the EoS model. Then for all $i,j$ (including $i=j$) such that $q_i, q_j > 0$, $x_i(q) \geq 1 - \frac{1}{\alpha}\frac{c_i}{c_j}$.
\end{theorem}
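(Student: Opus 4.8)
The plan is to mirror the strategy used for Theorem~\ref{thm:unique2}: extract first-order conditions for the miners who invest, and combine them with the observation that an investing miner cannot be losing money. Throughout fix an equilibrium $q$, write $x_i = x_i(q)$, and set $\beta := \frac{\alpha-1}{\alpha} = 1-\frac1\alpha \in [0,1)$. First I would repeat the argument from the opening paragraph of the proof of Lemma~\ref{lem:foc} to show that in any equilibrium $\sum_{j\neq i}q_j^\alpha > 0$ for every $i$ (if a single miner were the only investor she could halve her investment at no loss of market share; if nobody invested, a tiny investment would be strictly profitable). In particular $x_i < 1$, so $1-x_i > 0$.

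Next, on the region $\{\sum_{j\neq i}q_j^\alpha > 0\}$ the function $U_i$ is differentiable in $q_i$ for $q_i>0$, with
\[ \frac{\partial x_i}{\partial q_i} = \frac{\alpha q_i^{\alpha-1}(\sum_j q_j^\alpha - q_i^\alpha)}{(\sum_j q_j^\alpha)^2} = \frac{\alpha}{q_i}\,x_i(1-x_i). \]
Since $q_i>0$ is interior to $[0,\infty)$, any best response there is a stationary point, yielding the first-order condition $c_i q_i = \alpha\, x_i(1-x_i)$ for every miner with $q_i>0$. Moreover such a miner could deviate to $q_i=0$ for utility $0$, so $x_i - c_i q_i \ge 0$; combined with the first-order condition and $x_i>0$ this gives the participation bound $x_i \ge 1-\frac1\alpha = \beta$, which is already the $i=j$ case of the theorem (and presumably how Corollary~\ref{cor:dfoc} is obtained).

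For general $i,j$ with $q_i,q_j>0$, divide the two first-order conditions and use $x_i/x_j = (q_i/q_j)^\alpha$ to eliminate $q_i,q_j$, obtaining
\[ \frac{c_i}{c_j} = \left(\frac{x_i}{x_j}\right)^{\!\beta}\,\frac{1-x_i}{1-x_j}. \]
The target inequality $x_i \ge 1-\frac1\alpha\frac{c_i}{c_j}$ is equivalent to $\frac{c_i}{c_j} \ge \alpha(1-x_i)$, and since $1-x_i>0$ we may divide the identity by $1-x_i$ and reduce to proving $(x_i/x_j)^{\beta} \ge \alpha(1-x_j)$. Using the participation bound $x_i \ge \beta$ and monotonicity of $t\mapsto t^\beta$, it suffices to show $(\beta/x_j)^{\beta} \ge \alpha(1-x_j)$, i.e. $\beta^\beta \ge \alpha\,(1-x_j)\,x_j^\beta$, for $x_j\in[\beta,1)$. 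One computes $h'(x) = x^{\beta-1}(\beta-(1+\beta)x)$ for $h(x):=(1-x)x^\beta$, so $h$ is decreasing for $x\ge\frac{\beta}{1+\beta}$, in particular on $[\beta,1)$; hence $h(x_j)\le h(\beta) = (1-\beta)\beta^\beta = \frac1\alpha\beta^\beta$, which is exactly the needed estimate (tight at $x_j=\beta$). For $\alpha=1$ one reads $0^0=1$ here, or simply invokes Corollary~\ref{cor:main}.

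The main obstacle is this final step: the bounds $x_i,x_j\ge\beta$ supplied by individual rationality look far too weak for the sharp conclusion, and it is only the precise shape of $h$ — maximized on $[\beta,1)$ at the \emph{endpoint} $\beta$, with value $\beta^\beta/\alpha$ — that makes the inequality go through. The two structural points to notice are that one should first prove the weaker $i=j$ bound and then bootstrap it, and that (unlike in Section~\ref{sec:proportional}) the argument never uses $x_i+x_j\le 1$, which is precisely why it also covers $i=j$; everything else is the first-order-condition bookkeeping already rehearsed in Lemma~\ref{lem:foc}.
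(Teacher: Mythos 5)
Your proof is correct, and while it uses the same raw ingredients as the paper --- the first-order condition $c_iq_i=\alpha x_i(1-x_i)$, the individual-rationality bound $x_i\geq 1-1/\alpha$ (the paper's Corollary~\ref{cor:dfoc}), and the monotonicity of $x^{1-1/\alpha}(1-x)$ on $\bigl(\tfrac{\alpha-1}{2\alpha-1},1\bigr)$ (your $h$ is exactly the paper's $f$ from Proposition~\ref{prop:important}, and your threshold $\tfrac{\beta}{1+\beta}$ equals $\tfrac{\alpha-1}{2\alpha-1}$) --- it assembles them along a genuinely different route. The paper first packages the monotonicity of $f$ into Proposition~\ref{prop:important} (among participants, lower cost implies weakly larger investment), and then proves the theorem by a case split: for $c_i\leq c_j$ it solves the ratio of first-order conditions for $1-x_i$ and bounds the two factors $(q_j/q_i)^{\alpha-1}\leq 1$ and $1-x_j\leq 1/\alpha$ separately, while for $c_i\geq c_j$ it falls back on Corollary~\ref{cor:dfoc}. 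You instead stay entirely in market-share variables, substitute the crude bound $x_i\geq\beta$ into the identity $\tfrac{c_i}{c_j}=(x_i/x_j)^{\beta}\tfrac{1-x_i}{1-x_j}$, and close by maximizing $h(x)=(1-x)x^{\beta}$ over $[\beta,1)$ at the endpoint --- no case split and no ordering lemma needed, which is a mild but real simplification. What the paper's detour buys is Proposition~\ref{prop:important} as a standalone structural fact, which it reuses later (e.g.\ in showing that miner $1$ cannot participate in the equilibria of Example~\ref{ex:invest} when $\alpha=2$); your argument would not supply that on its own. Your handling of the preliminaries (ruling out $\sum_{j\neq i}q_j^{\alpha}=0$, using the FOC only as a necessary condition so that non-concavity of $U_i$ is irrelevant, and the $\alpha=1$ degenerate case) is all sound.
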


\begin{remark}
Observe that Theorem~\ref{thm:bound} is almost a strict generalization of Corollary~\ref{cor:main}, except for the assumption that $q_i > 0$. This assumption is necessary, as the conclusion otherwise does not hold (Example~\ref{ex:invest}). 

Observe also that Theorem~\ref{thm:bound} has bite even with perfectly symmetric costs. Theorem~\ref{thm:bound} immediately implies that anyone who participates at all has market share at least $1-1/\alpha$ (by taking $i = j$), so therefore the number of agents who participate in any equilibrium is at most $1+\frac{1}{\alpha-1}$. For example, if $\alpha = 1.05$, then at most $21$ miners will participate. If $\alpha = 1.1$, at most $11$ will. If $\alpha > 2$, there is no pure-strategy equilibrium. 

\end{remark}

\subsection{Proof of Theorem~\ref{thm:bound}}
The proof of Theorem~\ref{thm:bound} is similar to that of Theorem~\ref{thm:unique2}, but does require some new ideas (most notably, Proposition~\ref{prop:important}). We begin with a quick analysis of first-order conditions:

\begin{lemma}\label{lem:dfoc}
Suppose that $q$ is an equilibrium in the EoS model. Then for all $i$, 
\begin{equation}c_i q_i = \alpha \cdot x_i(q) (1 - x_i(q)).  \label{eq:dopt} \end{equation} 
Moreover, for all $i$, $c_i \geq \alpha\cdot q_i^{\alpha-1} \cdot \frac{1-x_i(q)}{\sum_j q^\alpha_j}$.\footnote{Note that this is implied by Equation~\eqref{eq:dopt} when $q_i > 0$, but needs a separate statement when $q_i = 0$.}
\end{lemma}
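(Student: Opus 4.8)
\textbf{Proof proposal for Lemma~\ref{lem:dfoc}.}

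The plan is to mimic the structure of the proof of Lemma~\ref{lem:foc}: first argue that $U_i$ is differentiable in $q_i$ wherever the relevant derivative makes sense, then compute the derivative, then invoke the standard first-order optimality conditions. The first preliminary step is to establish, exactly as in Lemma~\ref{lem:foc}, that in any equilibrium we have $\sum_{j\neq i} q_j > 0$ for every $i$: if all other miners invest zero, then no $q_i>0$ is a best response (halving $q_i$ keeps the reward at $1$ but lowers cost), and $q_i=0$ is not a best response either (any small positive $q_i$ captures the whole prize at negligible cost). With that in hand, $x_i(q) = q_i^\alpha/\sum_j q_j^\alpha$ is differentiable in $q_i$ on the relevant domain — there is a subtlety at $q_i=0$ when $\alpha<1$, but since $\alpha\ge 1$ the map $q_i\mapsto q_i^\alpha$ is differentiable at $0$ with derivative $0$, so $x_i$ is differentiable in $q_i$ even at $q_i=0$.

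Next I would compute the derivative. Writing $S = \sum_j q_j^\alpha$, the quotient rule gives
\[
\frac{\partial x_i(q)}{\partial q_i} = \frac{\alpha q_i^{\alpha-1} S - q_i^\alpha \cdot \alpha q_i^{\alpha-1}}{S^2} = \frac{\alpha q_i^{\alpha-1}}{S}\left(1 - \frac{q_i^\alpha}{S}\right) = \alpha q_i^{\alpha-1}\cdot\frac{1 - x_i(q)}{S}.
\]
Hence $U_i'(q) = \alpha q_i^{\alpha-1}\frac{1-x_i(q)}{S} - c_i$. The first-order optimality condition for $q_i$ being a best response to $q_{-i}$ is $U_i'(q)\le 0$, with equality whenever $q_i>0$. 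The inequality $U_i'(q)\le 0$ is precisely the second claim of the lemma, $c_i \ge \alpha q_i^{\alpha-1}\frac{1-x_i(q)}{S}$, and it holds for every $i$ regardless of whether $q_i$ is zero or positive. For the equality part when $q_i > 0$: multiply $U_i'(q)=0$ through by $q_i$ to get $c_i q_i = \alpha q_i^\alpha \frac{1-x_i(q)}{S} = \alpha x_i(q)(1 - x_i(q))$, which is \eqref{eq:dopt}. One still needs to check \eqref{eq:dopt} holds when $q_i = 0$, but then both sides are zero ($x_i(q)=0$ since $q_i^\alpha = 0$), so it holds trivially. This establishes \eqref{eq:dopt} for all $i$.

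The one genuine obstacle — as in Lemma~\ref{lem:foc} — is that the first-order condition $U_i'(q)\le 0$ is only a \emph{necessary} condition for a best response unless we know $U_i$ is concave (or at least quasi-concave) in $q_i$; otherwise $U_i'(\tilde q)\le 0$ could occur at a local-but-not-global situation, or we could miss a better response elsewhere. Here, however, we only need the \emph{necessary} direction: the lemma assumes $q$ is an equilibrium and merely extracts consequences, so it suffices that at an interior best response the derivative vanishes and at a boundary ($q_i=0$) best response the one-sided derivative is nonpositive. This follows from differentiability alone without any concavity, so the delicate part is really just the careful handling of the $q_i = 0$ boundary case for the derivative computation when $\alpha \geq 1$, which the hypothesis $\alpha\ge 1$ makes clean. (Note we do \emph{not} claim a converse here, unlike in Lemma~\ref{lem:foc}; for $\alpha>1$ the function $U_i$ need not be concave, which is exactly why the later argument for Theorem~\ref{thm:bound} needs the extra input of Proposition~\ref{prop:important} rather than a clean iff-characterization of equilibria.)
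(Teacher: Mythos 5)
Your proof is correct and follows essentially the same route as the paper: compute $\partial x_i(q)/\partial q_i$, and read off the first-order necessary conditions ($U_i'(q)\le 0$, with equality when $q_i>0$), noting that only the necessary direction is needed so no concavity is required. One trivial quibble: for $\alpha=1$ the derivative of $q_i\mapsto q_i^\alpha$ at $0$ is $1$, not $0$, but this does not affect anything since the formula you actually use, $\alpha q_i^{\alpha-1}(1-x_i(q))/\sum_j q_j^\alpha$, is correct for all $\alpha\ge 1$ --- and indeed handles the $q_i=0$ boundary more cleanly than the paper's equivalent form $\alpha x_i(q)(1-x_i(q))/q_i$.
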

\begin{proof} Consider the derivative of $x_i(q)$ with respect to $q_i$. We have that:
\[\frac{\partial}{\partial q_i} x_i(q) =  \frac{\partial}{\partial q_i} \frac{q_i^\alpha}{\sum_j q_j^\alpha} = \frac{\alpha q_i^{\alpha-1}}{\sum_j q^\alpha_j} -\frac{\alpha q^{2\alpha-1}_i}{(\sum_j q^\alpha_j)^2} = \alpha x_i(q)/q_i - \alpha x_i(q))^2/q_i = \alpha \frac{x_i(q)\cdot (1-x_i(q))}{q_i} \]

As $U'_i(q) = x'_i(q) - c_i$, we immediately conclude that $U'_i(q) \leq 0$ in equilibrium, with equality if $q_i > 0$. Therefore, $c_i \geq x'_i(q)$, with equality if $q_i > 0$, resulting in Equation~\eqref{eq:dopt}. 
\end{proof}

The following corollary immediately follows (and proves the symmetric case of Theorem~\ref{thm:bound}, where $c_i = c_j$ for all $i,j$). Below, we show that any miner who participates in equilibrium must invest enough to achieve a market share of $1-1/\alpha$. Observe that when $\alpha = 1$, this has no bite, as should be expected given our analysis of the proportional model.

\begin{corollary}\label{cor:dfoc}
Suppose $q$ is an equilibrium in the EoS model. Then for all $i$, $x_i(q) > 0 \Rightarrow x_i(q) \geq 1-1/\alpha$.\footnote{An inquisitive reader might note that this implies that no pure strategy equilibria exists if $\alpha > 2$. Intuitively, when $\alpha$ is large, non-existence may arise for a similar reason that pure-strategy equilibria fail to exist in an all-pay auction: given any investment profile by others, a miner's optimal choice is either to invest just slightly more than the highest competitor, or nothing at all.}
\end{corollary}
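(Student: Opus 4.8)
The plan is to combine the first-order condition from Lemma~\ref{lem:dfoc} with the observation that, in any equilibrium, a miner who invests a positive amount must earn nonnegative utility (since dropping out is always an available deviation).

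First I would note that $x_i(q) > 0$ forces $q_i > 0$, so Equation~\eqref{eq:dopt} of Lemma~\ref{lem:dfoc} applies and gives $c_i q_i = \alpha \cdot x_i(q)(1 - x_i(q))$. This expresses the miner's cost entirely in terms of her market share.

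Next I would establish the individual-rationality inequality $c_i q_i \leq x_i(q)$. Since $q$ is an equilibrium, $q_i$ is a best response to $q_{-i}$, so it is at least as good as the deviation $q_i = 0$. If $\sum_{j \neq i} q_j > 0$, that deviation yields utility $x_i(0, q_{-i}) - 0 = 0$, hence $U_i(q) = x_i(q) - c_i q_i \geq 0$, which is the claimed bound. If instead $\sum_{j \neq i} q_j = 0$ while $q_i > 0$, then $x_i(q) = 1 \geq 1 - 1/\alpha$ and the corollary already holds, so we may assume this degenerate case does not occur.

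Finally, substituting the first-order condition into $c_i q_i \leq x_i(q)$ gives $\alpha \cdot x_i(q)(1 - x_i(q)) \leq x_i(q)$, and dividing through by $x_i(q) > 0$ yields $\alpha(1 - x_i(q)) \leq 1$, i.e. $x_i(q) \geq 1 - 1/\alpha$. The argument is short; the only point requiring any care — the ``main obstacle,'' such as it is — is handling the degenerate investment profile in which $i$ is the sole investor, so that the comparison with deviating to zero is valid under the paper's tie-breaking convention for $x_i$ when all investments vanish.
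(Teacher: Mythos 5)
Your argument is correct and is essentially identical to the paper's proof: both combine the first-order condition $c_i q_i = \alpha\, x_i(q)(1-x_i(q))$ from Lemma~\ref{lem:dfoc} with the individual-rationality bound $c_i q_i \leq x_i(q)$ and divide by $x_i(q)>0$. Your extra care with the degenerate sole-investor profile is harmless but not needed for the stated conclusion, since $x_i(q)=1$ there trivially satisfies the bound.
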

\begin{proof}
If $x_i(q) > 0$, then the miner gets non-negative utility from participating, meaning that $x_i(q) \geq c_i q_i = \alpha x_i(q) (1-x_i(q))$. Rearranging yields $1/\alpha \geq 1-x_i(q)$, and then $x_i(q) \geq 1-1/\alpha$.
\end{proof}

We next present an analog of Corollary~\ref{cor:invest}, which states that if there is an equilibrium in which miners $i$ and $j$ both participate, then the miner with lower costs will have a higher market share.

\begin{proposition}\label{prop:important}
Let $q$ be any equilibrium of the EoS model, and let $q_i \geq q_j > 0$. Then $c_i \leq c_j$, and equality holds if and only if $q_i = q_j$.
\end{proposition}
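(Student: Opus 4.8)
\textbf{Proof proposal for Proposition~\ref{prop:important}.}

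The plan is to exploit the equilibrium first-order condition from Lemma~\ref{lem:dfoc}, namely $c_i q_i = \alpha\, x_i(q)(1-x_i(q))$ for every participating miner. Since $q_i \geq q_j > 0$ and market shares $x_i(q) = q_i^\alpha/\sum_k q_k^\alpha$ are strictly increasing in $q_i$, we have $x_i(q) \geq x_j(q) > 0$. Combining the two first-order conditions, I would write
\[
\frac{c_i}{c_j} = \frac{q_j}{q_i}\cdot\frac{x_i(q)(1-x_i(q))}{x_j(q)(1-x_j(q))}.
\]
So the whole proposition reduces to showing that the right-hand side is at most $1$, with equality iff $q_i=q_j$. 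Writing $x_i = x_i(q)$ and $x_j = x_j(q)$ and using $q_j/q_i = (x_j/x_i)^{1/\alpha}$, this becomes the claim that
\[
\left(\frac{x_j}{x_i}\right)^{1/\alpha}\cdot\frac{x_i(1-x_i)}{x_j(1-x_j)} \leq 1,
\]
i.e. (taking the ratio $r = x_i/x_j \geq 1$) that $r^{1-1/\alpha}\cdot\frac{1-x_i}{1-x_j} \leq 1$.

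The key additional ingredient I expect to need is Corollary~\ref{cor:dfoc}: since $i$ participates, $x_i \geq 1-1/\alpha$, so $1-x_i \leq 1/\alpha$. The cleanest route is probably to define $g(x) = \frac{x(1-x)}{x^{1/\alpha}} = x^{1-1/\alpha}(1-x)$ on the relevant domain and show $g$ is strictly decreasing there; then $x_i \geq x_j$ gives $g(x_i) \leq g(x_j)$ with equality iff $x_i = x_j$, which is exactly what I need after multiplying through. Differentiating, $g'(x) = x^{-1/\alpha}\big((1-1/\alpha)(1-x) - x\big) = x^{-1/\alpha}\big((1-1/\alpha) - (2-1/\alpha)x\big)$, which is negative precisely when $x > \frac{1-1/\alpha}{2-1/\alpha} = \frac{\alpha-1}{2\alpha-1}$. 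Since $x_j \geq x_i$… wait, I need the bound for $x_j$ too: $j$ participates, so $x_j \geq 1-1/\alpha = \frac{\alpha-1}{\alpha}$, and one checks $\frac{\alpha-1}{\alpha} \geq \frac{\alpha-1}{2\alpha-1}$ for $\alpha \geq 1$. Hence both $x_i$ and $x_j$ (and everything in between) lie in the region where $g$ is strictly decreasing, so $g(x_i) \leq g(x_j)$ with equality iff $x_i = x_j$, i.e. iff $q_i = q_j$.

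Finally I would translate back: $g(x_i) \leq g(x_j)$ means $x_i^{1-1/\alpha}(1-x_i) \leq x_j^{1-1/\alpha}(1-x_j)$, equivalently $\frac{x_i(1-x_i)}{x_i^{1/\alpha}} \leq \frac{x_j(1-x_j)}{x_j^{1/\alpha}}$. Multiplying both sides by $\big(\sum_k q_k^\alpha\big)^{1/\alpha}$ and using $x_i^{1/\alpha}\big(\sum_k q_k^\alpha\big)^{1/\alpha} = q_i$ converts this into $\frac{x_i(1-x_i)}{q_i} \leq \frac{x_j(1-x_j)}{q_j}$, and multiplying by $\alpha$ and invoking the first-order condition gives $c_i \leq c_j$, with equality iff $q_i = q_j$. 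The main obstacle is really just the monotonicity analysis of $g$ and making sure the domain bound from Corollary~\ref{cor:dfoc} is strong enough to place the whole interval $[x_j, x_i]$ inside the decreasing region — once that is pinned down, the rest is bookkeeping. I should also double-check the edge case $\alpha = 1$ (where $g(x) = 1-x$ is trivially decreasing and the statement degenerates gracefully) and the boundary $x = 1$, though since $q_i, q_j > 0$ with $n \geq 2$ forces $x_i, x_j < 1$ strictly, that is not an issue.
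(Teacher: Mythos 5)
Your proposal is correct and follows essentially the same route as the paper: both reduce the claim to the strict monotonicity of $f(x)=x^{1-1/\alpha}(1-x)$ on the interval guaranteed by Corollary~\ref{cor:dfoc}, via the identity $c_i\bigl(\sum_\ell q_\ell^\alpha\bigr)^{1/\alpha}=\alpha f(x_i(q))$ obtained from the first-order condition. The derivative computation and the check that $\frac{\alpha-1}{\alpha}\geq\frac{\alpha-1}{2\alpha-1}$ match the paper's argument exactly.
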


\begin{proof}

Lemma \ref{lem:dfoc} states that $c_i q_i = \alpha x_i(q) (1 - x_i(q))$, and analogously for $j$. If $q_i, q_j > 0$, we can note that $q_i = x_i(q)^{1/\alpha} \left(\sum_\ell q_\ell^\alpha \right)^{1/\alpha}$ and divide by $x_i(q)^{1/\alpha}$ to get
\[ c_i \left(\sum_\ell q_\ell^{\alpha}\right)^{1/\alpha} = \alpha x_i(q)^{1 - 1/\alpha} (1 - x_i(q)) = \alpha f(x_i(q)),\]
where we define $f(x) = x^{1 - 1/\alpha}(1 - x)$.

We know by Corollary \ref{cor:dfoc} that $x_i(q)$ and $x_j(q)$ are both at least $1 - 1/\alpha$. Because $x_i(q) \geq x_j(q) \Leftrightarrow q_i \geq q_j$, to prove the Proposition, it suffices to show that $f$ is decreasing on $(1 - 1/\alpha, 1)$. But
\begin{align*}
f'(x) & = (1-1/\alpha)x^{-1/\alpha}(1-x) -x^{1-1/\alpha} =(1 - 1/\alpha) x^{-1/\alpha} - (2 - 1/\alpha) x^{1 - 1/\alpha} \\
&  = \frac{2 \alpha - 1}{\alpha}  x^{-1/\alpha}\left( \frac{\alpha - 1}{2 \alpha - 1} - x\right).
\end{align*}
Thus, $f'(x) < 0$ for all $x > \frac{\alpha - 1}{2 \alpha - 1}$, and in particular for all $x > \frac{\alpha - 1}{\alpha}$, since $\frac{\alpha - 1}{\alpha} \geq \frac{\alpha - 1}{2\alpha - 1}$.\end{proof}

Proposition~\ref{prop:important} is the main workhorse in the proof of Theorem~\ref{thm:bound}, and the remainder proof of the proof now follows quite easily.

\begin{proof}[Proof of Theorem \ref{thm:bound}]
Let miner $i$ and miner $j$ both participate in $q$, and let $c_i \leq c_j$. Therefore, by Proposition~\ref{prop:important}, $q_j \leq q_i$. We have from \eqref{eq:dopt} that 
\[ \frac{c_i}{c_j} \frac{q_i}{q_j} = \frac{x_i (1 - x_i)}{x_j (1 - x_j)}.\]
Solving for $1 - x_i$ and using the fact that $x_i/x_j = q_i^\alpha/q_j^\alpha$, we get:
\begin{align*}
(1-x_i) &= \frac{c_i}{c_j}\cdot \frac{x_j}{x_i} \cdot \frac{q_i}{q_j} \cdot (1-x_j) = \frac{c_i}{c_j}\cdot \left(\frac{q_j}{q_i}\right)^{\alpha-1} \cdot (1-x_j) \leq \frac{1}{\alpha} \cdot \frac{c_i} {c_j}\\
\Rightarrow x_i &\geq 1 - \frac{1}{\alpha}\frac{c_i}{c_j}
\end{align*}
The final step follows because $1 - x_j \leq \frac{1}{\alpha}$ (Corollary~\ref{cor:dfoc}) and $q_j/q_i \leq 1$ (hypothesis + Proposition~\ref{prop:important}). This takes care of the case where $c_i \leq c_j$. When $c_i \geq c_j$, the theorem immediately follows from Corollary~\ref{cor:dfoc}. 
\end{proof}


As an aside for the interested reader, we note that the condition in Proposition \ref{prop:important} that both miners choose to invest is necessary. As the following example shows, when $\alpha > 1$, it is possible that one miner invests in equipment, while a lower-cost miner abstains. Intuitively, this occurs because in the presence of economies of scale, an inefficient incumbent can deter a more efficient potential entrant. 

\begin{example}\label{ex:invest}
Fix an integer $m\geq 2$, and consider the case with $n > m$ miners, and $\alpha =m/(m-1)$, where $c_1 = (1-1/m)^{1-1/m} < 1 = c_2 = c_3 = \cdots = c_n$. Then for any $M \subseteq \{2, 3, \ldots, n\}$ with $|M| = m$, there exists an equilibrium where $q_{i} =1/m$ for $i \in M$, and $q_i = 0$ for $i \not \in M$ (see Appendix~\ref{app:numerical} for a proof). When $m = \alpha = 2$, these are in fact the only equilibria. Observe that in all of these equilibria, $q_1 = 0$, despite the fact that $c_1$ is the lowest cost.
\end{example}

\section{Conclusions and Discussion}\label{sec:conclusions}

In this paper, we argue that in the presence of minor cost asymmetries and economies of scale, Bitcoin's proof-of-work protocol incentivizes extreme concentration of mining power. The observed dominance of large pools, which are themselves dominated by a few large mining entities, is not a temporary aberration, but rather is likely to persist for the foreseeable future. This conclusion has two broad implications for future cryptocurrency research. First, people interested in understanding and making predictions about Bitcoin should take into consideration that the vision of a large competitive market among miners is unlikely to be fulfilled. For example, existing research often assumes that miners are ``small"  \cite{CarlstenKNW16,Huberman-Leshno-Moallemi_2017}. In light of our findings, it seems especially important to understand which conclusions are sensitive to this assumption.

Second, our findings motivate research into alternate cryptocurrencies in which cost asymmetries and economies of scale are less pervasive. Much of the cost of Bitcoin mining is driven by costs associated with powering and cooling hardware.\footnote{For instance, one can purchase a state-of-the-art AntMiner S9 for roughly 2000 USD. The estimated cost to power it for a year in an average US home is roughly the same amount. Larger miners achieve roughly the same ratio: their electricity is considerably cheaper and they also receive bulk discounts for initial purchase (\url{https://www.buybitcoinworldwide.com/mining/profitability/}).} Because electricity is difficult to store and transport, its price varies widely with geography, implying that potential miners in different locations face very different costs. Alternative solution concepts like proof-of-stake~\cite{DaianPS17, gilad2017algorand, KiayiasRDO17}, where miners are selected in proportion to their wealth (in the currency itself) rather than computational power, might address this issue: while the cost of purchasing cryptocurrency may exhibit minor asymmetries across investors, it is likely nowhere near the level of asymmetry observed in electricity prices.

Our work also highlights the importance of minimizing economies of scale. Ideally, a miner's expected return would be concave in their share of mining power. This seems difficult to achieve in practice, as miners can always divide their hardware among several false identities. At the very least, however, one would like to ensure that large Bitcoin miners do not earn rewards {\em exceeding} their share of total mining power. Currently, this might occur due to orphaned blocks or strategic manipulations of the Bitcoin protocol. Below, we discuss each of these factors, and how they might be mitigated by alternative cryptocurrency designs.


Due to network latency, it is possible that two miners find solutions to the cryptopuzzle at ``the same'' time (meaning close enough that initially, neither is aware of the other). These solutions and their associated block of transactions cannot both be added to the ledger (this is part of Bitcoin's ``longest-chain protocol''), so the next miner to find a block determines which block is added (earning a reward for its miner) and which is ``orphaned'' (earns no reward). Miners are asked to tie-break in favor of the block they heard about first, implying that larger miners are less likely to see their blocks orphaned (because they are more likely to mine the subsequent block, and heard about their own block first). Therefore, our work further motivates the design of protocols such as GHOST~\cite{SompolinskyZ13}, which aims to greatly reduce the number of orphaned blocks, or Algorand~\cite{ChenM17, gilad2017algorand}, which removes the concept of an orphaned block altogether. 

While the Bitcoin protocol specifies how miners are {\em supposed} to behave, it cannot enforce this behavior. Existing work \cite{EyalS14, SapirshteinSZ16, KiayiasKKT16, CarlstenKNW16} observes that large miners may benefit from ``selfish mining" or waiting to announce a block. For instance, these strategies may allow a miner with (say) 10\% of the total computational power to reap (say) 11\% of the total rewards in most existing cryptocurrencies (these are not exact numbers). While most researchers recognize such deviations as problematic, the minor gains are often cited as a mitigating factor (after all, the remaining 90\% of miners still enjoy 89\% of the total rewards). Our work demonstrates that the possibility of such gains may encourage mining power concentration, and therefore supports the importance of designing incentive compatible protocols.

Our model assumes that the value of the block reward is exogenous to the behavior of Bitcoin miners. While the frequency of block creation and the number of bitcoins awarded to the block creator are fixed by the Bitcoin protocol, the value of these bitcoins in, say, US Dollars is determined by market forces. It is plausible that the price of Bitcoin is influenced by its perceived stability, in which case Bitcoin's value might drop as mining power became more concentrated, and rise as (aggregate) mining power increased. Because we believe that most fluctuations in Bitcoin's price are driven by speculation rather than assessments of its fundamentals, we leave a formal treatment of these effects to future work.

Another abstraction in this paper is to treat the acquisition of mining power as a one-time investment, when in reality new hardware can be purchased at any time. Because we focus on pure-strategy equilibria, so long as the value of Bitcoin (and cost of hardware) is stable over time, miners will have no incentive to wait to acquire hardware. In reality, of course, fluctuating Bitcoin prices may provide miners with incentives to acquire new hardware, or to stop powering existing hardware. While a full analysis of these incentives is beyond the scope of this work, we note that adding dynamics to the game doesn't change the fact that miners are investing resources to compete over a fixed prize, nor does it change our conclusion that seemingly small cost asymmetries or advantages to scale provide incentives for mining power to become highly concentrated.

Finally, recall that we choose to isolate two key factors (cost asymmetry and economies of scale) and demonstrate that these factors \emph{alone} transparently imply centralization of mining power. There are numerous nuances to Bitcoin mining (in addition to those described immediately above) which are certainly interesting to investigate in future work, but explicitly modeling these nuances would not change the underlying thesis of this paper: proof of work protocols result in high concentration of mining power in the presence of even minor cost asymmetries or economies of scale.

\newpage
\bibliographystyle{alpha}
\bibliography{MasterBib}

\newpage
\appendix

\section{Proof of Example~3}\label{app:numerical} 
Before getting into our proof, we'll need the following facts about equilibria in the EoS model. In the proportional model, utilities are concave in investment, so first-order conditions suffice to verify an equilibrium. In the EoS model, utilities are not concave in investment. In principle, this means one might need to appeal to global optimality conditions, but fortunately there is still some similar structure, which will be crucial in order to tractably analyze equilibria. Lemma~\ref{lem:convexconcave} provides ``concave-like'' structure on the utilities, and is used to prove Corollary~\ref{cor:unique}, which is our key structural insight for examples in the EoS model.

\begin{lemma}\label{lem:convexconcave}
Utilities in the EoS model are initially convex, and then concave. Specifically, $U_i(q)$ is strictly convex in $q_i$ when $x_i(q) \in [0,\frac{\alpha-1}{2\alpha})$ and strictly concave when $x_i(q) \in (\frac{\alpha-1}{2\alpha},1)$.
\end{lemma}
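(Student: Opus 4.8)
The plan is to compute $\partial^2 U_i/\partial q_i^2$ directly and read off its sign. Since $U_i(q) = x_i(q) - c_i q_i$ and the term $-c_i q_i$ is affine in $q_i$, we have $\partial^2 U_i/\partial q_i^2 = \partial^2 x_i(q)/\partial q_i^2$, so it suffices to analyze the curvature of $x_i$ in its own coordinate. I would work on the set where $\sum_{j\neq i} q_j^\alpha > 0$, which is the only interesting case: if every other miner invests zero then $x_i(q) \equiv 1$ for $q_i > 0$, $U_i$ is affine in $q_i$, and the claim holds trivially (the relevant market share is $1$, which lies in the concave regime). On the remaining set, I would first record, exactly as in the proof of Lemma~\ref{lem:dfoc}, that $\partial x_i(q)/\partial q_i = \alpha\, x_i(q)(1 - x_i(q))/q_i$.

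Next I would differentiate this once more with respect to $q_i$, treating $x_i = x_i(q)$ as a function of $q_i$ and re-substituting $\partial x_i/\partial q_i = \alpha x_i(1-x_i)/q_i$ wherever the chain rule produces it. Applying the quotient rule to $\alpha(x_i - x_i^2)/q_i$ and using $\frac{d}{dq_i}(x_i - x_i^2) = (1-2x_i)\cdot \alpha x_i(1-x_i)/q_i$ gives, after collecting terms,
\[
\frac{\partial^2 U_i}{\partial q_i^2} \;=\; \frac{\alpha\, x_i(q)\,\bigl(1 - x_i(q)\bigr)}{q_i^2}\,\Bigl(\alpha\bigl(1 - 2x_i(q)\bigr) - 1\Bigr).
\]
For $q_i > 0$ and $x_i(q) \in (0,1)$ the prefactor $\alpha x_i(1-x_i)/q_i^2$ is strictly positive, so the sign of the second derivative is that of $\alpha(1 - 2x_i(q)) - 1$. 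Solving $\alpha(1-2x) - 1 > 0$ yields $x < \tfrac{\alpha-1}{2\alpha}$, and $\alpha(1-2x) - 1 < 0$ yields $x > \tfrac{\alpha-1}{2\alpha}$.

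To turn this pointwise sign information into the stated convexity/concavity, I would invoke monotonicity: since $\partial x_i/\partial q_i = \alpha x_i(1-x_i)/q_i > 0$, the map $q_i \mapsto x_i(q)$ is continuous and strictly increasing on $[0,\infty)$, running from $0$ up to $1$. Hence $\{q_i : x_i(q) \in [0, \tfrac{\alpha-1}{2\alpha})\}$ is a genuine interval on whose interior $\partial^2 U_i/\partial q_i^2 > 0$, so $U_i$ is strictly convex in $q_i$ there, and $\{q_i : x_i(q) \in (\tfrac{\alpha-1}{2\alpha}, 1)\}$ is an interval on which $\partial^2 U_i/\partial q_i^2 < 0$, so $U_i$ is strictly concave. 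The only mildly delicate bookkeeping — the closest thing to an obstacle here — is the boundary: noting that the second derivative vanishes at the single point $x_i = \tfrac{\alpha-1}{2\alpha}$ and at $q_i = 0$, and recalling that strict convexity (resp. concavity) on a closed interval still follows from a strictly positive (resp. negative) second derivative on its interior. Everything else is a one-line differentiation.
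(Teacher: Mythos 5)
Your proposal is correct and follows essentially the same route as the paper: compute $\partial^2 U_i/\partial q_i^2$ using $\partial x_i/\partial q_i = \alpha x_i(1-x_i)/q_i$, factor out the positive prefactor, and read the sign off $\alpha(1-2x_i)-1$, which changes sign at $x_i = \tfrac{\alpha-1}{2\alpha}$. Your final expression agrees with the paper's (whose displayed last line contains a typo, $\alpha - 2x_i(q) - 1$ in place of $\alpha - 2\alpha x_i(q) - 1$; the surrounding text has the correct factor), and your extra bookkeeping on the degenerate case and on passing from pointwise sign to interval convexity is sound.
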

\begin{proof}
As in Lemma~\ref{lem:dfoc}, recall that $\frac{\partial}{\partial q_i} x_i(q) = \alpha x_i(q) (1-x_i(q))/q_i$ and $\frac{\partial}{\partial q_i} U_i(q) = \alpha\cdot x_i(q) \cdot (1-x_i(q)) /q_i - c_i$. So:

\begin{align*}
\frac{\partial}{\partial q_i} \left(\frac{\partial}{\partial q_i}U_i(q)\right) &= \alpha \cdot \frac{(1-2x_i(q))\cdot \frac{\partial x_i(q)}{\partial q_i} }{q_i} - \alpha \cdot \frac{x_i(q) (1-x_i(q))}{q_i^2}\\
&=\alpha \cdot \frac{ (1-2x_i(q))\cdot \frac{\partial x_i(q)}{\partial q_i}}{q_i} - \alpha \cdot \frac{\frac{q_i}{\alpha}\cdot \frac{\partial x_i(q)}{\partial q_i}}{q_i^2}\\
&= \frac{\frac{\partial x_i(q)}{\partial q_i}}{q_i} \cdot (\alpha - 2x_i(q) - 1).
\end{align*}
Observe that $\frac{\partial x_i(q)}{\partial q_i} > 0$, and $q_i > 0$, and $\alpha - 2\alpha x_i(q) - 1$ is initially positive but strictly decreasing in $q_i$, hitting $0$ when $x_i(q) = \frac{\alpha-1}{2\alpha}$. 
\end{proof}

\begin{corollary}\label{cor:unique} For all $q_{-i}$, the best response of miner $i$ is either to invest the value $q^*_i$ such that $x_i(q^*_i;q_{-i}) \geq \frac{\alpha-1}{2\alpha}$ and $\frac{\partial}{\partial q_i} U_i(q^*_i; q_{-i}) = 0$, or to invest nothing if such a $q_i^*$ does not exist. 
\end{corollary}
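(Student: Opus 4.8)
The plan is to treat $U_i(\cdot\,;q_{-i})$ as a one-variable function of $q_i$ on $[0,\infty)$ and exploit the ``convex-then-concave'' shape supplied by Lemma~\ref{lem:convexconcave}. Write $S=\sum_{j\ne i}q_j^\alpha$ and $g(q_i)=U_i(q_i;q_{-i})$, so that $g'(q_i)=\frac{\partial}{\partial q_i}U_i(q;q_{-i})=x_i'(q)-c_i$. First I would dispose of the degenerate case $S=0$: then $x_i(q_i;q_{-i})=1$ for every $q_i>0$, so $g(q_i)=1-c_iq_i$ has no maximizer on $[0,\infty)$; this case never arises in the equilibria analyzed later, so from here on assume $S>0$. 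With $S>0$, the map $q_i\mapsto x_i(q_i;q_{-i})=q_i^\alpha/(q_i^\alpha+S)$ is a continuous, strictly increasing bijection of $[0,\infty)$ onto $[0,1)$, so there is a unique threshold $\hat q_i\ge 0$ with $x_i(\hat q_i;q_{-i})=\frac{\alpha-1}{2\alpha}$ (and $\hat q_i>0$ when $\alpha>1$); by Lemma~\ref{lem:convexconcave}, $g$ is strictly convex on $[0,\hat q_i]$ and strictly concave on $[\hat q_i,\infty)$. Since $g$ is continuous with $g(0)=0$ and $g(q_i)\le 1-c_iq_i\to-\infty$, it attains its maximum at some $q^*\ge 0$; that $q^*$ is a best response.

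The substantive step is to show that any positive best response satisfies both conditions in the statement. If $q^*>0$, then $g$ is differentiable there (as $S>0$), so $g'(q^*)=0$, i.e.\ $\frac{\partial}{\partial q_i}U_i(q^*;q_{-i})=0$. To get $x_i(q^*;q_{-i})\ge\frac{\alpha-1}{2\alpha}$, equivalently $q^*\ge\hat q_i$, I argue by contradiction: if $q^*\in(0,\hat q_i)$ it is an interior point of the region where $g$ is strictly convex, so a vanishing derivative makes $q^*$ a strict local minimum, contradicting global maximality. Hence $q^*\ge\hat q_i$, which is exactly $x_i(q^*;q_{-i})\ge\frac{\alpha-1}{2\alpha}$.

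It remains to handle the ``invest nothing'' clause: if no $q_i$ satisfies both $x_i(q_i;q_{-i})\ge\frac{\alpha-1}{2\alpha}$ and $\frac{\partial}{\partial q_i}U_i(q_i;q_{-i})=0$, then $q^*=0$. Absence of such a point means $g'$ has no zero on $[\hat q_i,\infty)$. On that interval $g$ is strictly concave, so $g'$ is strictly decreasing, while $g'(q_i)=x_i'(q_i)-c_i=\alpha x_i(1-x_i)/q_i-c_i\to -c_i<0$ as $q_i\to\infty$; by continuity this forces $g'(\hat q_i)<0$, since otherwise $g'$ would cross $0$. On $[0,\hat q_i]$, $g$ is convex, so $g'$ is nondecreasing and hence $g'\le g'(\hat q_i)<0$ there too. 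Thus $g'<0$ on all of $(0,\infty)$, $g$ is strictly decreasing, and the unique maximizer is $q^*=0$. (When $\alpha=1$ we have $\hat q_i=0$, $g$ is globally concave, and the claim reduces to the first-order analysis already done in Lemma~\ref{lem:foc}; the displayed argument also applies verbatim.)

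I do not expect any genuine difficulty here beyond careful single-variable bookkeeping: verifying the monotone bijection $q_i\mapsto x_i$, locating the threshold $\hat q_i$, and tracking the sign of $g'$ at the endpoints $0^+$, $\hat q_i$, and $+\infty$ so that the convex and concave pieces glue together as claimed. Lemma~\ref{lem:convexconcave} does the conceptual work; the one genuinely load-bearing observation is that a \emph{positive} maximizer cannot lie in the strictly convex part, which is what pins the best response to a point with market share at least $\frac{\alpha-1}{2\alpha}$.
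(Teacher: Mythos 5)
Your proof is correct and is exactly the argument the paper intends: Corollary~\ref{cor:unique} is stated without proof as a consequence of Lemma~\ref{lem:convexconcave}, and your single-variable analysis (a positive global maximizer cannot be a critical point of a strictly convex piece, and if $g'$ never vanishes on the concave region then $g'<0$ everywhere, so $0$ is optimal) is the right way to fill that gap, including the correct dismissal of the degenerate case $\sum_{j\neq i}q_j^\alpha=0$. One point worth retaining from your write-up is that the implication only runs one way --- a $q_i^*$ satisfying both conditions may exist while $0$ is still the best response (when $U_i(q_i^*;q_{-i})<0$) --- which is consistent with the paper's use of the corollary in Appendix~\ref{app:numerical}, where participating miners' utilities are separately checked to be nonnegative.
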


Now we proceed to analyze the example. Recall that we wish to show that when $\alpha = m/(m-1)$, $c_1$ is sufficiently large and $c_i = 1$ for all $i \geq 2$, that it is an equilibrium for any $m$ miners with index $\geq 2$ to each invest $1/m$, and all other miners to invest $0$.

First, consider the reward of any miner who is investing at the proposed strategy profile: it is exactly $1/m- 1/m = 0$. Moreover, $i$'s market share at the proposed strategy profile is $1/m > \frac{\alpha-1}{2\alpha} = \frac{1}{2m}$. By Corollary~\ref{cor:unique}, it therefore suffices to show that first-order conditions are satisfied to prove that miner $i$ is best responding. By Lemma~\ref{lem:dfoc}, first-order conditions are satisfied, as $c_i q_i = 1/m = \alpha \cdot 1/m \cdot (m-1)/m = \alpha \cdot x_i(q) (1-x_i(q))$. 

Finally, we just need to make sure that all miners who choose not to participate are best responding as well. Note that because there are $m$ other miners each investing $1/m$, $x_1(q) = q_1^\alpha/(q_1^\alpha + m (1/m)^\alpha) = q_1^\alpha/(q_1^\alpha + m^{-1/(m-1)})$. Therefore:

\begin{align*}
x_1(q) &=\frac{q_1^\alpha}{q_1^\alpha + m^{-1/(m-1)}}\\
&\Rightarrow q_1^\alpha x_1(q) + x_1(q) m^{-1/(m-1)} = q^\alpha_1\\
&\Rightarrow q_1^\alpha (1-x_1(q)) = x_1(q) m^{-1/(m-1)}\\
\Rightarrow q_1 = \left(x_1(q) m^{-1/(m-1)}/(1-x_1(q))\right)^{1/\alpha}
\end{align*}

Now, we want to make use of Lemma~\ref{lem:dfoc}, which states that if miner $1$ is best responding, then $c_1 q_1 = \alpha x_1(q) (1-x_1(q))$. Specifically, we wish to find the value of $c_i$ such that miner $1$'s best response is to target a market share of $1/m$. Continuing from above, we get:

\begin{align*}
&x_1(q) = 1/m, \quad 1-x_1(q) = 1-1/m, \quad \alpha = 1+1/(m-1), \quad q_1 = \left(x_1(q) \cdot m^{-1/(m-1)}/(1-x_1(q))\right)^{1/\alpha},\\& \quad c_1 q_1 = \alpha x_1(q) (1-x_1(q))\\
&\Rightarrow q_1 = \left(m^{-1/(m-1)}/(m-1)\right)^{(m-1)/m} =m^{-1/m}\cdot (m-1)^{-(m-1)/m}.\\
&\Rightarrow c_1 m^{-1/m}\cdot (m-1)^{-(m-1)/m} = \frac{m}{m-1} \cdot \frac{1}{m} \cdot \frac{m-1}{m} = 1/m.\\
&\Rightarrow c_1 = m^{-1+1/m} \cdot (m-1)^{1-1/m} = (1-1/m)^{1-1/m}.\\
\end{align*}

This implies that when $c_1 = (1-1/m)^{1-1/m}$, the best response of miner $1$ is either to target a market share of $1/m$, or invest $0$. In fact, both are best responses and yield $0$ utility, as we solved for $c_1 q_1 = 1/m$. This allows us to conclude that if $c_1 = (1-1/m)^{1-1/m}$, it is a best response for miner $1$ to not invest. This further allows us to conclude that when $c_1 > (1-1/m)^{1-1/m}$, it is a best response for miner $1$ to not invest (as all strategies aside from not investing become strictly less attractive). Therefore, miner $1$'s best response is to not invest whenever $c_1 \geq (1-1/m)^{1-1/m}$.

To see that these are the only possible equilibria when $\alpha = 2$, observe by Corollary~\ref{cor:dfoc} that in any equilibrium there must be exactly two miners who participate and they must have market share exactly $1/2$. Assume for contradiction that one of them is miner $1$ (and the other is some $i > 1$). Lemma~\ref{lem:dfoc} then implies that $c_1q_1 = 1/2 = c_i q_i$. But if they each have market share $1/2$, we must have $q_1 = q_i$. As $c_1 < c_i$, there is no equilibrium of this form. 

\section{No Equilibrium when $n = \infty$}\label{app:noeq}
In this section we show an example (based on Example~\ref{ex:exp}) with countably many miners where no equilibrium exists, despite the fact that an equilibrium exists for every finite prefix, \emph{and} the sequence of equilibria converge as $n \rightarrow \infty$. 

\begin{example}\label{ex:noeq} For the rest of this section, let $c_i = 1-2^{-i-k}$, for any $k > 0$. 
\end{example}

\begin{proposition} In Example~\ref{ex:noeq}, there is no equilibrium where finitely many miners participate.

\end{proposition}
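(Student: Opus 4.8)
The plan is to argue by contradiction and reduce everything to the first-order analysis already carried out in Lemma~\ref{lem:foc}. Suppose $\tilde q$ were an equilibrium of Example~\ref{ex:noeq} in which only a finite, nonempty set $S$ of miners invest (the all-zero profile is never an equilibrium, since any miner could deviate to some $q_i\in(0,1/c_i)$ and earn strictly positive utility). Put $Q:=\sum_j\tilde q_j$; this is a finite positive real because only finitely many $\tilde q_j$ are nonzero and $S\neq\emptyset$. The first step is to observe that Lemma~\ref{lem:foc}'s characterization carries over verbatim to the countable setting: its proof only ever reasons about a single miner's best response against a fixed profile $\tilde q_{-i}$ with $\sum_{j\neq i}\tilde q_j\in(0,\infty)$ (this positivity again holds for every $i$ by the same one-line argument as in that proof), against which $U_i(\cdot,\tilde q_{-i})$ is the identical concave function as in the finite-$n$ case, so first-order conditions remain necessary and sufficient. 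Hence for every miner $i$,
\[ x_i(\tilde q)=\max\bigl(1-c_iQ,\,0\bigr),\qquad\text{equivalently}\qquad \tilde q_i>0\iff c_iQ<1. \]

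The second step is an elementary two-case split on the value of $Q$. If $Q\le 1$, then $c_iQ\le c_i<1$ for \emph{every} $i$ (recall $c_i=1-2^{-i-k}<1$), so every one of the countably many miners participates, contradicting finiteness of $S$. If instead $Q>1$, I use $\sum_i x_i(\tilde q)=1$ together with $1-c_iQ\le 1-c_i$ (valid since $Q\ge 1$) to get
\[ 1=\sum_i x_i(\tilde q)=\sum_{i\in S}(1-c_iQ)\le\sum_{i=1}^\infty(1-c_i)=\sum_{i=1}^\infty 2^{-i-k}=2^{-k}<1, \]
using $k>0$. This contradiction finishes the proof. Phrased through the function $X$ of \eqref{eq:x}, the displayed identity is exactly $X(1/Q)=1$; but $X(c)\le X(1)=2^{-k}<1$ for $c\le 1$, while $X(c)=+\infty$ for $c>1$ because then every summand is bounded below by $1-1/c>0$ — so $X$ never equals $1$, which in fact rules out \emph{any} equilibrium with finite total investment, not merely one with finitely many participants.

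The only genuinely delicate point is the first step, transferring the first-order characterization of Lemma~\ref{lem:foc} (stated there for finite $n$) to the countable model; but this is precisely where the hypothesis of finitely many participants does the work, guaranteeing that $\sum_{j\neq i}\tilde q_j$ is always a finite positive real so that no new analytic subtlety (divergence of total investment, ill-defined market shares) can arise. Everything after that is the routine case analysis above. For context I would also note that this non-existence is exactly the phenomenon hinted at after Example~\ref{ex:exp}: the unique finite-$n$ equilibria of this cost sequence (Theorem~\ref{thm:unique2}) converge, as $n\to\infty$, to the profile $q_i=2^{-i-k}$, whose market shares sum to $2^{-k}<1$ rather than to $1$, so the limiting profile fails the adding-up condition and is not an equilibrium.
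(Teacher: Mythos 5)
Your proof is correct, but it takes a genuinely different route from the paper's. The paper argues by truncation: letting $n$ be the least efficient participant, it restricts the purported equilibrium to the finite game on miners $1,\dots,n+1$, notes that the restriction must be an equilibrium of that finite game, and then invokes the uniqueness statement of Theorem~\ref{thm:unique2} together with $X(1)<1$ (hence $c^*>1$) to conclude that \emph{all} of miners $1,\dots,n+1$ must participate there --- contradicting the definition of $n$. That approach has the virtue of never needing to re-examine the first-order analysis in the countable setting: all the analytic work is delegated to the already-proved finite-$n$ theorem. You instead re-derive the characterization $x_i(\tilde q)=\max(1-c_iQ,0)$ of Lemma~\ref{lem:foc} directly in the infinite game (correctly observing that its proof is a per-miner argument against a fixed opponent profile with finite positive total, which the finiteness hypothesis guarantees), and then close with a clean dichotomy on $Q=\sum_j\tilde q_j$: if $Q\le 1$ every miner wants in, and if $Q>1$ the market shares of the participants sum to at most $\sum_i(1-c_i)=2^{-k}<1$. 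Both arguments are sound; yours is more self-contained and, as you note, actually proves the slightly stronger statement that no equilibrium with finite total investment exists (since $X$ never attains the value $1$ on this instance), which also foreshadows the paper's second proposition. You also explicitly dispose of the all-zero profile, a degenerate case the paper's proof silently skips when it posits a ``least efficient miner to participate.'' The one point to be careful about --- and you were --- is that Lemma~\ref{lem:foc} is stated for finite $n$, so its transfer to countably many miners is exactly where the hypothesis of finitely many participants must be spent; had you applied it without comment, that would have been a gap.
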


\begin{proof}
Assume for contradiction that an equilibrium $q$ exists with finitely many miners, and let $n$ denote the least efficient miner to participate. Consider instead the finite instance that contains only miners $1,\ldots, n+1$. Then clearly $q^{(n+1)}$ ($q$ restricted to its first $n+1$ coordinates) is an equilibrium for this game. Specifically, as far as the first $n+1$ miners can tell, $q$ is identical with or without miners $> n+1$, because they don't participate anyway. So if each of them are best responding in $q$, they are certainly best responding in $q^{(n+1)}$. 

By Theorem~\ref{thm:unique2}, there is a unique such equilibrium. Observe that $X(1) = \sum_i (1-c_i) < 1$, from which it follows that $c^* > 1$ (because $X(c^*) =1$ and $X$ is non-decreasing). Therefore, all miners choose to participate as $c_i < c^*$. However, by definition of $n$, miner $n+1$ does \emph{not} participate in $q$ (and therefore also does not participate in $q^{(n+1)}$), so $q^{(n+1)}$ is not an equilibrium for the game restricted to the first $n+1$ miners, and therefore $q$ cannot be an equilibrium. 
\end{proof}

\begin{observation}\label{obs:noeq} When Example~\ref{ex:noeq} is restricted to $n < \infty$ miners, every miner chooses to participate and miner $i$'s market share is $\frac{1-2^{-k}+2^{-k-n}+(n-1)2^{-i-k}}{n-2^{-k}+2^{-k-n}}$. Observe that for all $i$, as $n \rightarrow \infty$, the market share approaches $2^{-i-k}$ from above.

\end{observation}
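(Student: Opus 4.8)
The plan is to invoke Theorem~\ref{thm:unique2}, which gives $x_i(q) = \max(1 - c_i/c^*, 0)$ where $c^*$ is the unique solution of $X(c^*) = 1$, and then compute $c^*$ in closed form for the costs $c_i = 1 - 2^{-i-k}$. First I would check that every miner participates: since $\sum_{i=1}^n (1 - c_i) = \sum_{i=1}^n 2^{-i-k} = 2^{-k}(1 - 2^{-n}) < 1$, we have $X(1) < 1$, and monotonicity of $X$ (Lemma~\ref{lem:monotone}) forces $c^* > 1 > c_i$ for every $i$. In particular $c^* > c_n$, so on the relevant range $X$ is affine in $1/c$: $X(c) = \sum_{i=1}^n (1 - c_i/c) = n - \frac{1}{c}\sum_{i=1}^n c_i$, and setting this equal to $1$ yields $c^* = \frac{1}{n-1}\sum_{i=1}^n c_i$.

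Next I would evaluate the geometric sum $\sum_{i=1}^n c_i = n - 2^{-k}(1 - 2^{-n}) = n - 2^{-k} + 2^{-k-n}$, so that $c^* = \frac{n - 2^{-k} + 2^{-k-n}}{n-1}$. Substituting into $x_i = 1 - c_i/c^*$ and clearing denominators gives
\[
x_i = \frac{(n - 2^{-k} + 2^{-k-n}) - (n-1)(1 - 2^{-i-k})}{n - 2^{-k} + 2^{-k-n}} = \frac{1 - 2^{-k} + 2^{-k-n} + (n-1)2^{-i-k}}{n - 2^{-k} + 2^{-k-n}},
\]
which is the claimed expression.

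For the limiting statement, I would subtract $2^{-i-k}$ and factor the numerator; a short computation shows
\[
x_i - 2^{-i-k} = \frac{(1 - 2^{-i-k})\left(1 - 2^{-k} + 2^{-k-n}\right)}{n - 2^{-k} + 2^{-k-n}},
\]
whose numerator and denominator are both strictly positive (using $k>0$, so $2^{-k}<1$). Hence $x_i > 2^{-i-k}$ for every finite $n$. Since in the previous display the leading terms of numerator and denominator are $(n-1)2^{-i-k}$ and $n$, the ratio tends to $2^{-i-k}$ as $n \to \infty$, and by the factorization the convergence is from above.

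There is essentially no conceptual obstacle: the argument is the observation that $X$ becomes affine in $1/c$ once $c$ exceeds the largest cost, combined with bookkeeping for a geometric sum. The one point that needs care is justifying $c^* > c_n$ so that the affine formula for $X$ is the one in force; this is exactly where $X(1) < 1$ is used, and it is also what yields the ``every miner participates'' claim.
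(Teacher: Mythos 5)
Your proposal is correct: the paper states this as an Observation without a written proof, but the intended derivation is exactly yours --- apply Theorem~\ref{thm:unique2}, use $X(1)=\sum_i 2^{-i-k}<1$ to get $c^*>1$ (the same step the paper uses in the preceding proposition) so that all miners participate and $X$ is affine in $1/c$, then solve $c^*=\frac{1}{n-1}\sum_i c_i$ and simplify. Your algebra, including the factorization $x_i-2^{-i-k}=\frac{(1-2^{-i-k})(1-2^{-k}+2^{-k-n})}{n-2^{-k}+2^{-k-n}}$ establishing convergence from above, checks out.
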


Observation~\ref{obs:noeq} suggests that there should be an equilibrium of Example~\ref{ex:noeq} where miner $i$'s market share is $2^{-i-k}$. Something of course seems off about this, because the market shares would then sum to $2^{-k} < 1$ (as $k > 0$). Indeed, there is no such equilibrium, and in fact no equilibrium at all.

\begin{proposition} In Example~\ref{ex:noeq}, there is no equilibrium where infinitely many miners participate. 
\end{proposition}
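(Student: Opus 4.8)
The plan is to argue directly from first-order conditions about the aggregate investment $Q := \sum_j q_j$, without attempting to restrict to a finite prefix (which fails here precisely because the cost structure no longer lets us name a ``least efficient participant''). Suppose toward a contradiction that $q$ is an equilibrium in which infinitely many miners invest. First I would rule out $Q = \infty$: if the aggregate were infinite, then any miner with $q_i > 0$ would earn market share $0$ and hence utility $-c_i q_i < 0$, strictly worse than deviating to $q_i = 0$ (the remaining aggregate stays infinite), so no one would participate --- contradiction. Hence $0 < Q < \infty$.

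Next I would transfer the first-order conditions of Lemma~\ref{lem:foc} to this setting. The content of that lemma is local to a single miner: for fixed $i$, $U_i(q) = q_i/(q_i + \sum_{j \neq i} q_j) - c_i q_i$ is concave and differentiable in $q_i$ whenever $\sum_{j \neq i} q_j > 0$, which holds here since at least one other miner participates, and it depends on the others only through the finite number $Q - q_i$. So, exactly as in Lemma~\ref{lem:foc}, every participating miner satisfies $x_i(q) = 1 - c_i Q$ (which is therefore positive), and every non-participating miner satisfies $c_i \geq 1/Q$.

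Finally I would split on the value of $Q$. If $Q > 1$, then $1/Q < 1$; since $c_i \to 1$, only finitely many indices satisfy $c_i < 1/Q$, yet a participant needs $c_i < 1/Q$ to have positive share, so only finitely many participate --- contradiction. If $Q \leq 1$, then $c_i < 1 \leq 1/Q$ for \emph{every} $i$, so no miner can be abstaining, i.e.\ all of them participate and $x_i(q) = 1 - c_i Q$ for all $i$; summing gives $1 = \sum_i x_i(q) = \sum_i (1 - c_i Q)$. When $Q < 1$ each summand exceeds $1 - Q > 0$, so the series diverges --- contradiction. When $Q = 1$ each summand is $1 - c_i = 2^{-i-k}$, so $\sum_i x_i(q) = 2^{-k} < 1$ since $k > 0$ --- contradiction. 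This exhausts all cases, so no equilibrium with infinitely many participants exists.

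I do not expect a serious obstacle. The only steps needing care are the first one (ruling out $Q = \infty$ and checking that the per-miner concavity and first-order analysis of Lemma~\ref{lem:foc} is unaffected by having infinitely many other players, since each miner sees only their finite sum), after which the argument is the familiar ``market shares sum to one'' bookkeeping --- now with the wrinkle that $c_i \to 1$ forces a participation dichotomy (too few participants when $Q > 1$, all of them when $Q \leq 1$), and neither branch is consistent with $\sum_i x_i(q) = 1$.
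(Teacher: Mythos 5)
Your proof is correct and follows essentially the same route as the paper's: bound the total investment $Q$, use the first-order conditions to get $x_i(q) = 1 - c_i Q$ for every participant, and show that $\sum_i (1 - c_i Q) = 1$ is impossible for any $Q$ (diverges for $Q < 1$, sums to $2^{-k} < 1$ for $Q = 1$, and forces finitely many participants for $Q > 1$). The only differences are organizational --- the paper re-derives the best response explicitly and bounds $T$ via the cost-versus-reward inequality rather than by noting that $Q = \infty$ gives everyone zero market share, and it handles $Q > 1$ by divergence of the sum rather than by your (cleaner) observation that $c_i \to 1$ leaves only finitely many miners able to afford participation.
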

\begin{proof}
Assume for contradiction that some equilibrium exists, and let $T=\sum_i q_i$ denote the total investment in mining. Note that $T < \infty$ as the total reward is $1$, and the minimum possible cost for investment $T$ is $T \cdot (1-2^{-1-k})$. So we must have $T \leq 1/(1-2^{-1-k})$. 

Now consider any miner $i$, who invests $q_i$. Then their reward for choosing to invest $q'_i$ instead is exactly $\frac{q'_i}{T-q_i+q'_i}- c_iq'_i$, whose derivative is $\frac{1}{T-q_i + q'_i} - \frac{q'_i}{(T-q_i + q'_i)^2} - c_i = \frac{T-q_i - c_i(T-q_i + q'_i)^2}{(T-q_i+q'_i)^2}$. In fact, miner $i$'s best response is to invest $\max\{0,\sqrt{(T-q_i)/c_i} - (T-q_i)\}$. So if miner $i$ is investing at all, we must have $q_i = \sqrt{(T-q_i)/c_i}-(T-q_i)$. Therefore, we get the following implications (where $S$ denotes the set of miners who participate in the assumed equilibrium):
\begin{align*}
q_i &= \sqrt{(T-q_i)/c_i} - T + q_i\\
&\Rightarrow T^2 = T/c_i - q_i/c_i\\
&\Rightarrow q_i = c_iT^2 - T\\
&\Rightarrow \sum_{i \in S} q_i = \sum_{i \in S}\left( c_i T^2 - T\right)\\
&\Rightarrow 1 = \sum_{i \in S} \left(c_i T - 1\right).\\
\end{align*}
Note that this equation simply doesn't hold for any $T$. If $T < 1$, then the RHS diverges to $-\infty$ (recall that $S$ is countably infinite). If $T > 1$, then the RHS diverges to $+\infty$. If $T = 1$, then the RHS is upper bounded by $2^{-k}$ (in the case that $S = \mathbb{N}$), which is strictly less than one.

\end{proof}

The above propositions combine to yield Theorem~\ref{thm:noeq}.
\begin{theorem}\label{thm:noeq}In Example~\ref{ex:noeq}, there exists a unique equilibrium for every prefix of $n < \infty$ miners, and miner $i$'s investment in these equilibria converges from above to $2^{-k-i}$. Yet, there exists no equilibrium for Example~\ref{ex:noeq} (or in fact any infinite subset of miners).

\end{theorem}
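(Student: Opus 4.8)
The plan is to assemble Theorem~\ref{thm:noeq} from pieces already in place: Theorem~\ref{thm:unique2} handles the finite prefixes, Observation~\ref{obs:noeq} (plus a short monotonicity computation) handles the convergence claim, and the two Propositions of this section handle non-existence in the infinite game. For a finite prefix (any finite $n \ge 2$), the restriction of Example~\ref{ex:noeq} to miners $1,\dots,n$ is simply an instance of the proportional model with strictly positive costs $c_i = 1 - 2^{-i-k}$, so Theorem~\ref{thm:unique2} applies verbatim and yields a unique pure-strategy equilibrium $q^{(n)}$. To make it explicit I would observe that $X(1) = \sum_{i=1}^n 2^{-i-k} < 2^{-k} < 1 < n = \lim_{c\to\infty}X(c)$, so Lemma~\ref{lem:monotone} forces $c^*_n > 1$, hence every miner participates; solving $X(c^*_n)=1$ gives the total investment $T^{(n)} = 1/c^*_n = (n-1)/\sum_{i=1}^n c_i = (n-1)/(n - 2^{-k} + 2^{-k-n})$, and $q^{(n)}_i = \tfrac{1}{c^*_n}(1 - c_i/c^*_n) = x^{(n)}_i \, T^{(n)}$ with $x^{(n)}_i$ exactly the market share recorded in Observation~\ref{obs:noeq}.

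For the convergence claim, since $T^{(n)} \to 1$ and $x^{(n)}_i \to 2^{-i-k}$ by Observation~\ref{obs:noeq}, we immediately get $q^{(n)}_i \to 2^{-i-k}$. To obtain ``from above'' I would write $q^{(n)}_i = \dfrac{(n-1)\big(1 - 2^{-k} + 2^{-k-n} + (n-1)2^{-i-k}\big)}{\big(n - 2^{-k} + 2^{-k-n}\big)^2}$ and verify directly that this rational expression is decreasing in $n$ for $n \ge 2$ --- a routine calculus exercise that I would isolate in a short lemma rather than carry out in the main line of argument.

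It remains to rule out an equilibrium for the infinite instance. Any candidate profile $q$ has participation set $S = \{i : q_i > 0\}$, which is either finite or countably infinite, and I would exclude both. If $S$ is finite with $n = \max S$, then the restriction of $q$ to coordinates $1,\dots,n+1$ is an equilibrium of the $(n+1)$-miner instance, since each of those miners faces an identical best-response problem in both games (the aggregate investment of their opponents is unchanged, as every miner beyond index $n+1$ is inactive in $q$ anyway); but by the finite-prefix analysis the unique equilibrium of the $(n+1)$-miner instance has \emph{all} $n+1$ miners active, contradicting $n+1 \notin S$. If instead $S$ is infinite, first bound $T := \sum_i q_i$: each active miner earns nonnegative utility, so $c_i q_i \le x_i(q)$, whence $\sum_i c_i q_i \le 1$, and since $c_i \ge c_1 > 0$ we get $T \le 1/c_1 < \infty$. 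Utilities are concave in $q_i$ in the proportional model, so each active miner's first-order condition holds and rearranges to $q_i = T - c_i T^2$; summing over $i \in S$ and using $\sum_{i\in S} q_i = T$ and $T>0$ yields $\sum_{i\in S}(1 - c_i T) = 1$. This has no solution: for $T < 1$ every summand exceeds $1 - T > 0$, so the series diverges to $+\infty$; for $T > 1$ the summands tend to $1 - T < 0$, so it diverges to $-\infty$; and for $T = 1$ the sum equals $\sum_{i\in S} 2^{-i-k} \le 2^{-k} < 1$. Hence no infinite-participation equilibrium exists either; the identical argument, with all sums taken over the appropriate index set, applies verbatim to any infinite subset of miners. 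Together with the finite case, this completes the theorem.

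I expect the infinite-participation case to be the main obstacle --- everything else is bookkeeping layered on Theorem~\ref{thm:unique2}. The delicate points there are (i) establishing $T < \infty$ \emph{before} invoking any first-order conditions, so that the manipulation $q_i = T - c_i T^2$ is even meaningful, and (ii) recognizing that the aggregated condition $\sum_{i\in S}(1 - c_i T) = 1$ must be contradicted for \emph{every} value of $T$ by a divergence/comparison argument rather than by a single inequality at one candidate value. This is precisely what the second Proposition of the section supplies.
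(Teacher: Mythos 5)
Your decomposition is the same as the paper's: finite prefixes are handled by Theorem~\ref{thm:unique2} together with the observation that $X(1)=\sum_i 2^{-i-k}<2^{-k}<1$ forces $c^*_n>1$ (so every miner participates); a finite participation set is ruled out by restricting to miners $1,\dots,\max S+1$ and invoking uniqueness; and an infinite participation set is ruled out by first bounding $T=\sum_i q_i$ and then aggregating the first-order conditions into $\sum_{i\in S}(1-c_iT)=1$, which no value of $T$ can satisfy. All of this is correct, and your sign in $q_i=T-c_iT^2$ is the right one (the appendix writes $q_i=c_iT^2-T$, a slip it silently undoes when evaluating the $T=1$ case).

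The one step that would fail is the deferred ``routine calculus exercise'': the investment $q_i^{(n)}=(c^*_n-c_i)/(c^*_n)^2$ is \emph{not} decreasing in $n$ for all $n\geq 2$, and need not even lie above its limit for small $n$. For instance, with $k=0.1$ and $i=1$ one computes $q_1^{(2)}\approx 0.4535 < 2^{-1.1}\approx 0.4665 < q_1^{(3)}\approx 0.4684$. The statement that holds for \emph{every} $n$ concerns market shares, not investments: $x_i^{(n)}=1-c_i/c^*_n>1-c_i=2^{-i-k}$ precisely because $c^*_n>1$, which is what Observation~\ref{obs:noeq} actually asserts. For investments, set $g(c)=(c-c_i)/c^2$, so that $q_i^{(n)}=g(c^*_n)$ with $g(1)=2^{-i-k}$ and $g'(c)=(2c_i-c)/c^3>0$ on $(1,2c_i)$, where $2c_i>1$ since $i+k>1$; since $c^*_n$ decreases to $1$, the sequence $q_i^{(n)}$ is eventually decreasing and eventually above $2^{-i-k}$, which is the sense in which ``converges from above'' should be read and proved. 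Everything else in your write-up goes through.
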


\begin{comment}
\section{Example with Vanishing Fraction of Market Participants}\label{app:examples}
\begin{example}
Let costs be evenly spaced on $[a,1]$ -- that is, $c_i = a + \frac{1-a}{n} i$. We claim that the number of miners entering is approximately $\sqrt{ \frac{2an}{1 - a}}$, and the market share of the lowest-cost miner is $\sqrt{\frac{2(1-a)}{an}} + o(n^{-1/2})$. In other words, as $n$ grows, the number of entrants tends to infinity, the market share of each entrant tends to zero, and for any $\epsilon > 0$, no miner with cost at least $a (1+\epsilon)$  (that is, with costs at least $(1 + \epsilon)$ times the lowest mining cost) will enter. 

Note also that the fraction of market participants vanishes.
\end{example}

Write $c^*$ as $c^* = a + \frac{1-a}{n} (k + \delta)$ for some $\delta \in (0,1]$, so that the first $k$ miners all have positive market share. Then by definition, $X(c^*) = \sum_{i = 1}^k (1 - c_i/c^*) = k - \sum_{i = 1}^k c_i/c^* = 1$, from which it follows that 
\[c^*k- c^* = \sum_{i = 1}^k c_i = k a + \frac{1-a}{n} \frac{k(k+1)}{2}. \]
Moving the terms on the right side to the left, and multiplying by $2n/(1-a)$, we get
\[k \frac{2n}{1-a} (c^* - a) - \frac{2n}{1 - a} c^*  - k(k+1) = 0.\]
Substituting our expression for $c^*$, we get that
\[2k(k+\delta)  - \frac{2n}{1 - a} (a+\frac{1-a}{n} (k + \delta)) - k(k+1)= 0,\]
which after rearrangement yields
\[k^2 - 3k +2\delta (k-1) - \frac{2an}{1 - a} = 0.\]
The left side is increasing in $\delta$. Because $0 < \delta \leq 1$, we have that
\[(k - 3/2)^2 - 9/4 = k^2 - 3k  < \frac{2an}{1 - a} \leq k^2 - k - 2 < (k - 1/2)^2 \]
From this, we conclude that 
\[\frac{1}{2} + \sqrt{\frac{2an}{1 - a}} < k < \frac{3}{2} + \sqrt{ \frac{9}{4} + \frac{2an}{1 - a}} < 3 + \sqrt{\frac{2an}{1 - a}}.\]
where the final inequality follows because $\sqrt{x+y} < \sqrt{x} + \sqrt{y}$.

\end{document}